\title{On subexponential running times for approximating directed Steiner tree and related problems}
\author{
Marek Cygan\thanks{
   Department of Math and information, University of Warsaw, Warsaw, Porland.
   Email: \texttt{cygan@nimuw.edu.pl}}
\and
Guy Kortsarz\thanks{
   Computer Science Department, Rutgers University -- Camden, Camden NJ, USA.
   Email: \texttt{guyk@camden.rutgers.edu}}
\and
Bundit Laekhanukit\thanks{
   Institute for Theoretical Computer Science,
   Shanghai University of Finance and Economics, Shanghai, China.
   Email: \texttt{bundit@sufe.edu.cn}}
}
\date{\today}
\newtheorem{lemma}{Lemma}[section]
\newtheorem{theorem}[lemma]{Theorem}
\newtheorem{definition}[lemma]{Definition}
\newtheorem{corollary}[lemma]{Corollary}
\newtheorem{claim}[lemma]{Claim}
\newtheorem{conjecture}[lemma]{Conjecture}
\newcommand{\opt}{\textsc{opt}\xspace}
\newcommand{\ethh}{\textsc{ETH}\xspace}
\newcommand{\pgc}{\textsc{PGC}\xspace}
\newcommand{\sat}{\textsc{SAT}\xspace}
\newcommand{\setcover}{\textsc{Set-Cover}\xspace}
\newcommand{\submodularcover}{\textsc{Submodular-Cover}\xspace}
\newcommand{\dst}{\textsc{Directed-Steiner-Tree}\xspace}
\newcommand{\clique}{\textsc{Maximum Clique}}
\newcommand{\NP}{\textsc{NP}}
\newcommand{\ZPTIME}{\textsc{ZPTIME}}
\newcommand{\lc}{\textsc{Label-Cover}\xspace}
\newcommand{\polylog}{\mathrm{polylog}}
\newcommand{\poly}{\mathrm{poly}}
\newcommand{\gst}{\textsc{Group-Steiner-Tree}\xspace}
\newcommand{\cst}{\textsc{Covering-Steiner-Tree}\xspace}
\newcommand{\cp}{\textsc{Connected-Polymatroid}\xspace}
\newcommand{\Oh}{{\mathcal{O}}}
\newcommand{\cG}{{\mathcal{G}}}
\newcommand{\cP}{{\mathcal{P}}}
\begin{document}

%

\maketitle

\begin{abstract}
This paper concerns proving almost tight
(super-polynomial) running times, for achieving desired 
approximation ratios for various problems.
To illustrate the question we study, 
let us consider the $\setcover$ problem with $n$ elements and $m$ sets.
Now we specify our goal to approximate $\setcover$ to a factor of $(1-\alpha)\ln n$,
for a given parameter $0<\alpha<1$.
What is the best possible running time for achieving such approximation ratio?
This question was answered implicitly in the work
of Moshkovitz [Theory of Computing, 2015]:
Assuming both the {\em Projection Games Conjecture} (PGC)
and the {\em Exponential-Time Hypothesis} (ETH),
any $((1-\alpha) \ln n)$-approximation algorithm for $\setcover$
must run in time at least $2^{n^{c\cdot\alpha}}$,
for some small constant $0<c<1$. 

We study the questions along this line. 
Our first contribution is in strengthening the above result.
We show that under ETH and PGC the running time requires for  any
$((1-\alpha) \ln n)$-approximation algorithm for $\setcover$ 
is essentially $2^{n^{\alpha}}$.
This (almost) settles the question since our lower bound
matches the best known running time of $2^{O(n^\alpha)}$
for approximating $\setcover$ to within a factor $(1-\alpha) \ln n$
given by Cygan et al. [IPL, 2009].
Our result is tight up to the constant multiplying the $n^{\alpha}$ terms
in the exponent.

The lower bound of $\setcover$ applies to all of its generalization, e.g.,
$\gst$, $\dst$, $\cst$ and $\cp$. 
We show that, surprisingly, in almost exponential running time, 
these problems reduce to $\setcover$.
Specifically, we complement our lower bound by presenting 
an $(1-\alpha)\ln n$ approximation algorithm 
for all aforementioned problems that runs in time
$2^{n^{\alpha}\cdot \log n }\cdot \poly(m)$.

We further study the approximation ratio in the regime of
$\log^{2-\delta}n$ for $\gst$ and $\cst$.
Chekuri and Pal [FOCS, 2005]
showed that $\gst$ admits $(\log^{2-\alpha}n)$-approximation in 
time $\exp(2^{\log^{\alpha+o(1)}n})$, for any parameter $0 < \alpha < 1$.
We show the running time lower bound of \gst:
any $(\log^{2-\alpha}n)$-approximation algorithm
for \gst must run in time at least $\exp((1+o(1)){\log^{\alpha-\epsilon}n})$,
for any constant $\epsilon>0$, unless the ETH is false.
Our result follows by analyzing the hardness construction of \gst
due to the work of Halperin and Krauthgamer [STOC, 2003].
The same lower and upper bounds hold for \cst.
\end{abstract}

\newpage

\section{Introduction}
\label{sec:intro}

The traditional study of {\em approximation algorithms}
concerns designing {\em algorithms that run in polynomial time}
while producing a solution whose cost is within a factor
$\alpha$ away from the optimal solution.
Once the approximation guarantees meet the barrier,
a natural question is to ask whether the approximation
ratio can be improved if the algorithms are given
running time beyond polynomial.
This has been a recent trend in designing 
approximation algorithms that allows ones to
break through the hardness barrier;
see, e.g., 
\cite{BansalCLNN17-arxiv,BonnetP18,FotakisLP16,BourgeoisEP09a,BourgeoisEP09a,BourgeoisEP11,BourgeoisEP11,CyganKW09,CyganKPW08}.

While ones ask for improving the approximation ratio,
another interesting question is to ask the converse:
Suppose the approximation ratio has been specified at the start,
what is the smallest running time required to achieve 
such approximation ratio? 
This question has recently been an active subject of study;
see, e.g., \cite{ChalermsookLN13,BonnetE0P15,BonnetLP18}.

To answer the above question, ones need complexity assumptions stronger 
than $\mathrm{P}\neq\mathrm{NP}$ as this standard assumption does not
precisely specify the running times besides polynomial versus
super-polynomial. 
The most popular and widely believed assumption is the 
{\em Exponential-Time Hypothesis} ($\ethh$), which states that 3-$\sat$
admits no $2^{o(n)}$-time algorithm.
This together with the {\em almost linear size PCP theorems}
\cite{Dinur07,MoshkovitzR10} yields many running time
lower bounds for approximation algorithms
\cite{ChalermsookLN13,BonnetE0P15,BonnetLP18}.
Let us give an example of the results of this type:

\medskip

{\bf Example:} Consider the $\clique$ problem, in which the goal
is to find a clique of maximum size in a graph $G=(V,E)$ on $n$ vertices.
This problem is known to admit no $n^{1-\epsilon}$-approximation, 
for any $\epsilon>0$, unless $\mathrm{P}=\mathrm{NP}$
\cite{Hastad96,Zuckerman07}.
Now, let us ask for an $\alpha$-approximation algorithm,
for $\alpha$ ranging from constant to $\sqrt{n}$.
There is a trivial $2^{n/\alpha}\poly(n)$-time approximation algorithm,
which is obtained by partitioning vertices of $G$ into to $\alpha$ parts
and finding a maximum clique from each part separately. 
Clearly, the maximum clique amongst these solutions is an $\alpha$-approximate solution,
and the running time is $2^{n/\alpha}\poly(n)$.
The question is whether this is the best possible running-time.
Chalermsook et al. \cite{ChalermsookLN13} showed that 
such a trivial algorithm is almost tight\footnote{
Recently, Bansal et al. \cite{BansalCLNN17-arxiv} showed that $\clique$ admits
$\alpha$-approximation in time $2^{n/\tilde{O}(\alpha\log^2\alpha)}\poly(n)$.}.
To be precise, under the $\ethh$,
there is no $\alpha$-approximation algorithm that runs in time
$2^{n^{1-\epsilon}/\alpha^{1+\epsilon}}$, for any constant $\epsilon>0$,
unless the $\ethh$ is false.

\medskip

In this paper, we consider the question along this line.
We wish to show the tight lower and upper bounds on the running times of 
polylogarithmic approximation algorithms for 
$\setcover$, $\gst$ and $\dst$ (which we will define in the next section)
and related problems.

\begin{itemize}
\item For any constant $0<\alpha<1$, what is the best possible running times 
      for $(1-\alpha)\ln{n}$-approximation algorithms for $\setcover$ and $\dst$.

\item For any constant $0<\alpha<1$, what is the best possible running time for 
      $\log^{2-\alpha}$-approximation algorithms for $\gst$.

\end{itemize}

In fact, one of our ultimate goals is to find an evidence on which 
ranges of running-times that the $\dst$ problem
admits poly-logarithmic approximations.
To be precise, we would like to partially answer the question of whether 
$\dst$ admits polylogarithmic approximations in polynomial-time,
which is a big open problem in the area.
While we are far from solving the above question,
we aim to prove possibly tight running-time for 
$\dst$ in the logarithmic range in a very fine-grained manner,
albeit assuming two strong assumptions,
the {\em Exponential-Time Hypothesis} ($\ethh$) \cite{ImpagliazzoP01,ImpagliazzoPZ01}
and the {\em Projection Game Conjectures} ($\pgc$) \cite{Moshkovitz15},
simultaneously.

\subsection{The problems studied in this paper}

\subsubsection{The $\setcover$ problem and its extensions}
In the weighted $\setcover$ problem, 
the input is a universe $U$ of size $n$ 
and a collection ${\cal S}$ of $m$ subsets of 
$U$. Each set $s\in {\cal S}$ has a cost $c(s)$.
The goal is to select a minimum cost subcollection
${\cal S}'\subseteq {\cal S}$ such that the union of the sets in ${\cal S}'$ 
spans the entire universe $U$.

The more general $\submodularcover$
problem
admits as input a universe $U$ with cost $c(x)$ on every $x\in U$.
A function is {\em submodular} if for every $S\subseteq T\subseteq V$
and for every $x\in U\setminus T$, $f(S+x)-f(S)\geq f(T+x)-f(T)$.
Let $f:2^U\mapsto R$ be a submodular non-decreasing function.
The goal in the submodular cover problem is to
minimize $c(S)$ subject to $f(S)=f(U)$. This problem strictly generalizes 
the weighted $\setcover$ problem.

The $\cp$ problem is 
the case that the elements in $U$ are leaves of 
a tree, and both the elements and tree edges have costs.
The goal is to select a set $S$ so that $f(S)=f(U)$
and that $c(S)+c(T(S))$ is minimized,
where $T(S)$ is the unique tree rooted at $r$ spanning $S$.

\subsubsection{The $\gst$ problem}
In the $\gst$ problem, the input consists an 
undirected graph with cost $c(e)$ on each edge $e\in E$,
a collection of subsets $g_1,g_2,\ldots,g_k\subseteq V$ (called {\em group})
and a special vertex $r\in V$.
The goal is is to find a minimum cost tree rooted at $r$ that 
contains at least one vertex from every group $g_i$.
In the $\cst$ problem, there is a demand 
$d_i$ for every $g_i$ and $d_i$ vertices of $g_i$ must be spanned in the tree
rooted by $r$
This $\gst$ problem strictly contains the $\setcover$ problem.
Every result for $\gst$ holds also for the $\cst$ problem
given that there is a reduction from $\cst$ to $\gst$ \cite{EvenKS02,GuptaS03}.

\subsubsection{The $\dst$ problem}
In the $\dst$ problem, the input consists of a directed graph 
with costs $c(e)$ on edges, a collection 
$S$ of terminals, and a designated root $r\in V$.
The goal is to find a minimum cost directed graph rooted at $r$
that spans $S$. 
This problem has $\gst$ as a special case.

\subsection{Related work}
\label{sec:related-work}

The $\setcover$ problem is a well-studied problem.
The first logarithmic approximation, to the best of our knowledge,
is traced back to the early work of Johnson \cite{Johnson74a}.
Many different approaches have been proposed to approximate $\setcover$, e.g.,
the dual-fitting algorithm by Chv\'atal \cite{Chvatal79};
however, all algorithms yield roughly the same approximation ratio.
The more general problem, namely, the $\submodularcover$ problem
was also shown to admit $O(\log n)$-approximation
in the work of Wolsey \cite{Wolsey82}.
The question of why all these algorithms yield the same approximation ratio
was answered by Lund and Yannakakis \cite{LundY94} who showed that the approximation ratio
$\Theta(\log n)$ is essentially the best possible unless $\mathrm{NP}\subseteq\mathrm{DTIME}(n^{\log\log{n}})$.
Subsequently, Feige \cite{Feige98} showed the more precise lower bound 
that $\setcover$ admits no $(1-\epsilon)\ln{n}$-approximation, for any $\epsilon>0$, 
unless $\mathrm{NP}\subseteq\mathrm{DTIME}(n^{\polylog(n)})$;
this assumption has been weaken to $\mathrm{P}\neq\mathrm{NP}$ by
the recent work of Dinur and Steurer \cite{DinurS14}.
These lower bounds are, however, restricted to polynomial-time algorithms.
In the regime of subexponential-time,
Cygan, Kowalik and Wykurz \cite{CyganKW09} showed that $\setcover$ admits
an approximation ratio of $(1-\alpha)\ln n$ 
in $2^{O(n^\alpha+\polylog(n))}$ time.
On the negative side, Moshkovitz~\cite{Moshkovitz15} introduced 
the {\em Projection Games Conjecture} (PGC) to prove the approximation hardness of 
$\setcover$.
Originally, the conjecture was introduced in an attempt to show 
the $(1-\epsilon)\ln n$-hardness of $\setcover$ under $\mathrm{P}\neq\mathrm{NP}$
(which is now proved by Dinur and Steurer \cite{DinurS14}).
It turns out that this implicitly implies that
$\setcover$ admits no $(1-\alpha)\ln$-approximation algorithm in
$2^{n^{O(\alpha)}}$ time under PGC and ETH.

The generalization of the $\setcover$ problem is the $\gst$ problem.
Garg, Konjevod and Ravi \cite{GargKR98} presented a novel LP rounding
algorithm to approximate $\gst$ on trees to within a factor of $O(\log^2n)$.
Using the {\em probabilistic metric-tree embedding} \cite{Bartal96,FakcharoenpholRT04},
this implies an $O(\log^3n)$-approximation algorithm for $\gst$ in general graphs.
On the negative side, Halperin and Krauthgamer showed
the lower bound of $\log^{2-\epsilon} n$ for any $\epsilon>0$ for
approximating $\gst$ on trees under the assumption that $\NP\not\subseteq \ZPTIME(n^{\polylog(n)})$.
This (almost) matches the upper bound given by the algorithm by Garg et al.
For the related problem, the {\em Connected  Polymatroid} problem was given a 
polylogarithmic approximation algorithm by C\'alinescu and Zelikovsky \cite{CalinescuZ05};
their algorithm is based on the work of Chekuri, Even and Kortsarz \cite{ChekuriEK06},
which gave a {\em combinatorial} $\polylog(n)$ approximation for $\gst$ on trees.

The problem that generalizes all the above problems is the $\dst$ problem.
The best known approximation ratio for this problem is $n^\epsilon$ for any constant $\epsilon>0$
\cite{CharikarCCDGGL99,KortsarzP99} in polynomial-time. 
In quasi-polynomial-time, $\dst$ admits an $O(\log^3 n)$-approximation algorithm.
The question of whether $\dst$ admits a polylogarithmic approximation
in polynomial-time has been a long standing open problem.

\section{Our results}
\label{sec:results}

We show that 
under the combination of $\ethh$ and $\pgc$, the running time 
for approximating $\setcover$ to within a factor of $(1-\alpha)\ln n$ 
must be at least $2^{n^\alpha}$, where $0<\alpha<1$ is a given parameter.
This improves the work of Moshkovitz who (implicitly) showed 
the running time lower bound of $2^{n^{O(\alpha)}}$.
We complement this by showing that $\dst$ admits a $(1-\alpha)\ln n$ approximation algorithm
that runs in time $2^{n^{\alpha}\cdot \log n}$ time.
Since $\dst$ is the generalization of $\setcover$, $\gst$ and $\dst$,
the lower bounds apply to all the aforementioned problems.
Hence, up to a small factor of $\log n$ in the exponent, we get tight 
running time lower bounds for approximating all these problems to within $(1-\alpha)\ln n$.
Essentially, the same algorithm and proof give the same result for the $\cp$ problem.

We also investigate the work of Chekuri and Pal \cite{ChekuriP05} who showed that,
for any constant $0<\delta<1$, $\gst$ admits a $\log^{2-\delta}n$ approximation algorithm
that runs in time $\exp(2^{(1+o(1))\log^{\delta}n})$.
We show that, for any constant $\epsilon>1$,
there is no $\log^{2-\delta-\epsilon}n$ approximation algorithm
for $\gst$ (and thus $\cst$) that runs in time $\exp(2^{(1+o(1))\log^{\delta-\epsilon}n})$.
This lower bound is nearly tight.
We note that a reduction from $\cst$ to $\gst$ was given in \cite{EvenKS02}. 
Thus, any approximation algorithm for $\gst$ also applies for $\cst$. 

\section{Formal definition of our two complexity assumptions}

\begin{definition}
In the $\lc$ problem with the projection property (a.k.a., the {\em Projection game}),
we are given 
a bipartite graph $G(A,B,E)$, two alphabet sets (also called {\em labels}) $\Sigma_A$ 
and $\Sigma_B$, and for any edge (also called {\em query})
$e\in E$, there is a function $\phi_e: \Sigma_A\mapsto \Sigma_B$.
A labeling $(\sigma_A,\sigma_B)$ is a pair of functions 
$\sigma_A:A\mapsto \Sigma_A$ and $\sigma_B:B\mapsto \Sigma_B$ 
assigning labels to each vertices of $A$ and $B$, respectively.
An edge $e=(a,b)$ is {\em covered} by $(\sigma_A,\sigma_B)$ if
$\phi_e(\sigma_A(a))=\sigma_B(b)$.
The goal in $\lc$ is to find a labeling $(\sigma_A,\sigma_B)$
that covers as many edges as possible.
\end{definition}
In the context of the {\em Two-Provers One-Round game} (2P1R), 
every label is an answer to some "question" $a$ sent to the
Player $A$ and some question $b$ sent to the Player $B$,
for a query $(a,b)\in E$.
The two answers make the verifier accept
if a label $x\in \Sigma_x$ assigned to $a$ and 
a label $y\in \Sigma_B$ assigned to $b$
satisfy $\phi(x)=y$. 
Since any label $x \in \Sigma_A$ has a unique label in $\Sigma_B$ 
that causes the verifier to accept, $y$ is called
the {\em projection of $x$ into $b$}.

We use two conjectures in our paper.
The first is the Exponential Time Hypothesis
($\ethh$),
which asserts that an instance of the 3-$\sat$ problem on $n$ variables and $m$ clauses cannot be solved in $2^{o(n)}$-time. This was later showed by Impagliazzo, Paturi and Zane \cite{ImpagliazzoP01} that any 3-$\sat$ instance can be sparsified in $2^{o(n)}$-time to an instance with $m=O(n)$ clauses. 
Thus, ETH together with the sparsification lemma \cite{ImpagliazzoPZ01}
implies the following:

\medskip
\noindent
{\bf Exponential-Time Hypothesis combined with the Sparsification Lemma:}
Given a boolean 3-CNF formula $\phi$ on $n$ variables and $m$ clauses,
there is no $2^{o(n+m)}$-time algorithm that decides whether $\phi$ is satisfiable.
In particular, 3-\sat admits no subexponential-time algorithm.
\medskip

The following was proven by Moshkovitz and Raz \cite{MoshkovitzR10}.
\begin{theorem}[\cite{MoshkovitzR10}]
\label{thm:near-linear-PCP}
There exists $c>0$, such that for every $\epsilon \ge 1/n^c$, $3$-$\sat$ on inputs of size $n$ can be efficiently reduced to $\lc$  of size $N=n^{1+o(1)}\poly(1/\epsilon)$ 
over an alphabet of size $\exp(1/\epsilon)$ that has soundness error $\epsilon$. The graph is bi-regular (namely, every two questions on the same side 
participate in the same number of queries).
\end{theorem}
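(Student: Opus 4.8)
The plan is to build the \lc instance directly from algebraic (low-degree-extension) \pcp machinery, rather than by amplifying the soundness of a constant-error \pcp, because the obvious amplification tool---parallel repetition---would blow the instance size up to $n^{\Omega(1/\epsilon)}$. The three properties that must hold simultaneously are (i)~size $n^{1+o(1)}\poly(1/\epsilon)$, (ii)~soundness $\epsilon$, and (iii)~alphabet $\exp(1/\epsilon)$, together with perfect completeness and the projection property; since every step of the construction is explicit, the reduction is automatically polynomial time, and one extracts (ii) from running a \emph{sub-constant-error} low-degree test instead of from repetition.

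\textbf{Arithmetization.} Fix a field $\mathbb{F}$ with $|\mathbb{F}|=\poly(1/\epsilon,\log n)$, an integer $m$ with $|\mathbb{F}|^{m}=\Theta(n)$, and a subset $H\subseteq\mathbb{F}$. Identify the $n$ variables of the $3$-$\sat$ instance with $H^{m}$, so a satisfying assignment is a map $H^{m}\to\{0,1\}$ whose low-degree extension is a polynomial $\hat a:\mathbb{F}^{m}\to\mathbb{F}$ of total degree $d=m|H|=n^{o(1)}$. The clauses translate, in the standard way, into the requirement that one low-degree polynomial derived from $\hat a$ vanish on $H^{m}$, which is witnessed by a short algebraic certificate (an auxiliary family of low-degree polynomials over $\mathbb{F}^{m}$). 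The \pcp proof is the table of $\hat a$ together with these auxiliary tables; it has size $|\mathbb{F}|^{m}\cdot n^{o(1)}=n^{1+o(1)}$, and completeness is immediate.

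\textbf{Two queries, composition, and size.} Let side $B$ be the points of $\mathbb{F}^{m}$ with alphabet $\mathbb{F}$, and side $A$ a family of low-degree ``manifolds'' (low-dimensional affine subspaces, or bounded-degree curves) with alphabet the restrictions of degree-$d$ polynomials to a manifold; an edge joins a manifold to a point on it, with the projection constraint ``the polynomial assigned to the manifold, evaluated at that point, equals the point's value.'' To reach soundness $\epsilon$ one analyzes the low-degree/agreement test at sub-constant error (Raz--Safra / Arora--Sudan style), which forces the manifold dimension---hence the number of field elements read per manifold---up to $\Theta(1/\epsilon)$ and so the $A$-alphabet up to $\exp(1/\epsilon)$. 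The size obstruction is that one cannot afford all manifolds, so they must be drawn from a derandomized, sampler-based subfamily of size $n^{1+o(1)}\poly(1/\epsilon)$ that is still ``pseudorandom enough'' for the soundness analysis, and the incidence structure is made design-like (each point on equally many manifolds, each manifold through equally many points) to get bi-regularity. If what remains has alphabet merely polynomial in $n$ (because a restriction carries $\poly(d)$ field elements), one applies a constant number of rounds of low-error proof composition---with decodable PCPs / PCPs of proximity, in the style of Dinur--Harsha and Moshkovitz--Raz---to drive the alphabet down to $\exp(1/\epsilon)$ at the cost of only a further $n^{o(1)}\poly(1/\epsilon)$ blowup, while preserving soundness $\epsilon$, the projection property, and bi-regularity; a final bookkeeping step checks that the cumulative size is $n^{1+o(1)}\poly(1/\epsilon)$ and that $\epsilon$ may be as small as $1/n^{c}$ for a fixed $c>0$ (the floor on $\epsilon$ coming from needing, e.g., $d<|\mathbb{F}|$ and the test/composition analyses to stay valid).

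\textbf{The main obstacle.} Any one or two of (i)--(iii) is classical; their conjunction is the entire difficulty, since it rules out parallel repetition and forces one to run the low-degree test itself at sub-constant error over a \emph{derandomized}, near-linear-size family of manifolds. Building that family so that it is simultaneously a good enough sampler and of size $n^{1+o(1)}$, and then pushing the soundness analyses of both the test and the composition through at error $\epsilon$, is the technical heart of the Moshkovitz--Raz theorem; everything else is standard \pcp bookkeeping.
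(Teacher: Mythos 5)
You should first note that the paper does not prove this statement at all: Theorem~\ref{thm:near-linear-PCP} is imported verbatim from Moshkovitz and Raz \cite{MoshkovitzR10} and is used as a black box (its only role here is as a foil for Conjecture~\ref{conj:proj-game}, which strengthens the alphabet bound from $\exp(1/\epsilon)$ to $\poly(1/\epsilon)$). So there is no internal proof to compare your write-up against; the only fair comparison is with the actual Moshkovitz--Raz construction.

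Measured against that, your proposal correctly identifies the strategy at the level of a roadmap --- avoid parallel repetition, arithmetize via the low-degree extension, run a sub-constant-error low-degree/agreement test over a \emph{derandomized} near-linear-size family of manifolds, and then compose (decodable-PCP style) to bring the alphabet down while preserving the projection property --- but it is not a proof. The two claims you invoke in passing, namely (a) the existence of an almost-linear-size, sampler-like family of manifolds for which the agreement test remains sound at error $\epsilon$, and (b) a composition theorem at error $\epsilon$ that preserves the projection property and bi-regularity with only $n^{o(1)}\poly(1/\epsilon)$ blowup, are precisely the technical content of \cite{MoshkovitzR10} (and of the companion sub-constant-error low-degree test paper); asserting them is assuming the theorem. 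Smaller gaps of the same kind: the bi-regularity claim is waved through as ``made design-like'' with no argument, and the accounting that the cumulative size stays $n^{1+o(1)}\poly(1/\epsilon)$ through the composition rounds --- the step where naive constructions fail --- is stated rather than checked. As a citation-level summary of why the theorem is hard and how it is proved, your text is accurate; as a self-contained proof it has the heart of the argument missing, which is also why the present paper simply cites the result instead of reproving it.
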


There does not seem to be an {\em inherent} reason that the alphabet would be so large.
This leads to the following conjecture  posed by Moshkovitz \cite{Moshkovitz15}.

\begin{conjecture}[The Projection Games Conjecture \cite{Moshkovitz15}]
\label{conj:proj-game}
There exists $c>0$, such that for every $\epsilon \ge 1/n^c$, $3$-$\sat$ on inputs of size $n$ can be efficiently reduced to $\lc$  of size $N=n^{1+o(1)}\poly(1/\epsilon)$ 
over an alphabet of size $\poly(1/\epsilon)$ that has soundness error $\epsilon$. Moreover, the graph is bi-regular (namely, every two questions on the same side 
participate in the same number of queries).
\end{conjecture}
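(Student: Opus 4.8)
Since Conjecture~\ref{conj:proj-game} differs from the unconditional Theorem~\ref{thm:near-linear-PCP} only in demanding a $\poly(1/\epsilon)$ answer alphabet in place of an $\exp(1/\epsilon)$ one, the plan is to prove it by \emph{alphabet reduction}: start from the Moshkovitz--Raz instance --- which already has size $n^{1+o(1)}\poly(1/\epsilon)$, soundness $\epsilon$, the projection property, and bi-regularity --- and compose it with an inner verifier that drives the alphabet down to $\poly(1/\epsilon)$ while preserving all four of these features. First I would fix the target soundness $\epsilon\ge 1/n^{c}$ and view the large-alphabet instance as a projection game whose symbols over a big alphabet $A=\exp(1/\epsilon)$ should themselves be encoded by a proof that a few-query, small-alphabet verifier can test. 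The crucial design constraint is that the composed verifier must again be a projection game, both so that the reduction to \setcover still goes through and so that the composition step can be iterated.

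Second, for the inner object I would use a PCP of proximity / assignment tester (in the style of Dinur--Reingold, Ben-Sasson et al., or the robust PCPs behind Dinur's gap amplification) for the single-constraint language ``the pair of local symbols satisfies the edge constraint $\phi_e$,'' instantiated over a locally testable and locally decodable code whose alphabet is $\poly(1/\epsilon)$ and, decisively, does \emph{not} depend on $A$. One composition step replaces each outer constraint over $A$ by a bundle of inner constraints over a much smaller alphabet $A'$; choosing the inner code so that its decoding map \emph{projects} onto the outer symbols makes the output another projection game. Third, I would iterate: if one step sends $A\mapsto A'=\polylog A$ (or $\exp(\sqrt{\log A})$), then $O(\log\log(1/\epsilon))$ rounds bring the alphabet to $\poly(1/\epsilon)$, and as long as each round multiplies the size by only an $A'^{1+o(1)}$ factor the total size stays $n^{1+o(1)}\poly(1/\epsilon)$. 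Fourth, I would robustize each instance before composing it, so that the inner verifier only has to detect constraints violated by a constant fraction; this costs a constant factor of soundness per round, hence a $2^{O(\log\log(1/\epsilon))}=\polylog(1/\epsilon)$ factor overall, which is absorbable into the $n^{o(1)}$ and $\poly(1/\epsilon)$ slack, and re-regularizing after each step restores bi-regularity.

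The hard part --- and the reason this is still a conjecture rather than a theorem --- is carrying all of these requirements out \emph{simultaneously}. Every known alphabet-reduction gadget with polynomial output alphabet has at least one of three fatal features: (a) it blows up the size by a $\poly(1/\epsilon)$ (or larger) multiplicative factor \emph{per constraint}, so near-linearity is lost after a single round, let alone $\log\log$ of them; (b) its inner alphabet scales with the outer alphabet $A$, so the recursion does not actually shrink the alphabet; or (c) it outputs generic $2$-CSP constraints rather than projections, which blocks a second composition and breaks the reduction to \setcover. The genuine obstacle, then, is the simultaneous existence of a single gadget that is near-linear in length, in projection form, and alphabet-independent of the code it is applied to. Theorem~\ref{thm:near-linear-PCP} is precisely what one obtains after dropping the alphabet requirement, which is why in this paper we take Conjecture~\ref{conj:proj-game} as a hypothesis and only combine it with \ethh to derive the $\setcover$ running-time lower bound, rather than attempting to establish it.
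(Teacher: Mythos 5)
This statement is the Projection Games Conjecture itself: the paper offers no proof of it and uses it purely as a hypothesis (noting only that it differs from Theorem~\ref{thm:near-linear-PCP} in the alphabet size), and your treatment matches that stance exactly --- you correctly conclude that the missing alphabet-reduction gadget is precisely the open problem and that the conjecture must be assumed, not derived. Your composition sketch is sensible motivation but is not (and does not claim to be) a proof, which is consistent with how the paper handles this statement.
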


The difference between Theorem~\ref{thm:near-linear-PCP} and Conjecture~\ref{conj:proj-game}
is in the size of the alphabet. 

For our purposes, we only need soundness $\epsilon=1/\polylog(n)$,
and we know that the degree and alphabet size of the graph in
Conjecture~\ref{conj:proj-game} are always $\polylog(n)$
(which are inverse of the soundness).
Hence, we may assume the (slightly) weaker assumption (obtained by setting $\epsilon=1/\polylog(n)$ in Conjecture~\ref{conj:proj-game}) as below.

\begin{conjecture}[Projection Games Conjecture, a variant]
\label{conj:pgc2}
There exists $c>0$, such that for every $\epsilon =1/\polylog(n)$, $3$-$\sat$ on inputs of size $n$ can be efficiently reduced to $\lc$ of size $N=n^{1+o(1)}\poly(1/\epsilon)$ 
where {\bf the graph is bi-regular and all degrees are bounded by} 
$\polylog(n)$. 
The size of the alphabet is $\polylog(n)$ and the soundness is $1/\polylog(n)$.
and the completeness is $1$.
\end{conjecture}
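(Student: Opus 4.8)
The plan is to obtain Conjecture~\ref{conj:pgc2} as a direct instantiation of Conjecture~\ref{conj:proj-game} with the soundness parameter set to $\epsilon := 1/\polylog(n)$. First I would check that this choice is legal: Conjecture~\ref{conj:proj-game} requires $\epsilon \ge 1/n^{c}$ for a fixed constant $c>0$, and since $1/\polylog(n) \ge 1/n^{c}$ for every sufficiently large $n$, the reduction it guarantees applies without modification. This gives, for a \sat instance of size $n$, a \lc instance with the projection property whose parameters I then need only transcribe.

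Next I would read off the numerical parameters one at a time. The size becomes $N = n^{1+o(1)}\poly(1/\epsilon) = n^{1+o(1)}\polylog(n)$, and a $\polylog(n)$ factor is absorbed into the $n^{o(1)}$, so $N = n^{1+o(1)}$ as claimed. The alphabet size is $\poly(1/\epsilon) = \polylog(n)$, and the soundness error is exactly $\epsilon = 1/\polylog(n)$. Perfect completeness is inherited for free: the reduction originates from \sat and, as in Theorem~\ref{thm:near-linear-PCP}, maps a satisfying assignment to a labeling that covers all edges, so the completeness is $1$.

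The only point that is not pure bookkeeping is the claim that every degree of the bi-regular graph is bounded by $\polylog(n)$. For this I would invoke the structure of the near-linear-size PCP underlying Conjecture~\ref{conj:proj-game}: the \lc instance is the constraint graph of a two-query test in which each proof location participates in only $\poly(1/\epsilon)$ tests, equivalently the total number of queries (edges) is $N\cdot\poly(1/\epsilon)$; combined with bi-regularity this forces every degree, on both the $A$-side and the $B$-side, to be $\poly(1/\epsilon) = \polylog(n)$. I expect this to be the main obstacle to a fully self-contained argument: one must either open the Moshkovitz--Raz-style construction far enough to see the $\poly(1/\epsilon)$ query complexity, or perform a degree-reduction step (for instance replacing high-degree questions by regular gadgets, or sub-sampling queries) while re-checking that soundness degrades by at most a constant factor and that exact bi-regularity is restored. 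Since the $\polylog(n)$ degree bound is a known feature of all near-linear projection-game constructions, the cleanest route is to cite it directly rather than to re-derive it.
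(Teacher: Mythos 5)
Your proposal matches the paper's treatment: the paper does not prove this statement either, but presents it as the (slightly weaker) assumption obtained by instantiating Conjecture~\ref{conj:proj-game} with $\epsilon = 1/\polylog(n)$ and then reading off size, alphabet, soundness and completeness, with the degree bound justified only by the one-line remark that the degrees and alphabet of the conjectured projection games are always $\polylog(n)$ (the inverse of the soundness). Your explicit flagging of the $\polylog(n)$ degree bound as the sole non-bookkeeping point, to be discharged by citing the structure of the near-linear projection-game constructions or by a degree-reduction step, is precisely the same caveat the paper handles by assertion, so the two approaches coincide.
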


We need to inspect very carefully and slightly change the proof of \cite{Moshkovitz15}
since we do not want the $\lc$ instance to grow by a lot 
by the modification in \cite{Moshkovitz15}. Hence, in fact we have to 
go over all steps of \cite{Moshkovitz15} and bound the size more carefully in all
steps that require that.

\section{First part of the proof}
We start with the same definition as in \cite{Moshkovitz15}.

\begin{definition}[Total disagreement]
Let $(G = (A,B,E), \Sigma_A, \Sigma_B, \Phi)$ be a $\lc$ instance.
Let $\phi_A:A \to \Sigma_A$ be an assignment to the A-vertices. 
We say that the A-vertices {\em totally disagree} on a vertex $b \in B$,
if there are no two neighbors $a_1, a_2 \in A$ of $b$, for which
$$\pi_{e_1}(\phi_A(a_1)) = \pi_{e_2}(\phi_A(a_2))\,,$$
where $e_1 = (a_1,b), e_2 = (a_2,b) \in E$.
\end{definition}

The above simply states that for a given assignment $\phi_A$ and a vertex $b \in B$,
no matter which label we assign to the vertex $b$, we will satisfy only one edge incident to it.

\begin{definition}[Agreement soundness]
Let $\cG = (G = (A,B,E), \Sigma_A, \Sigma_B, \Phi)$ be a $\lc$ for
deciding whether a Boolean formula $\phi$ is satisfiable.
We say that $\cG$ has {\em agreement soundness error} $\epsilon$, if
for unsatisfiable $\phi$, for any assignment $\phi_A : A \to \Sigma_A$, the $A$-vertices are in total disagreement on at least
$1-\epsilon$ fraction of the $b \in B$.
\end{definition}
For a Yes-Instance (of 3-$\sat$), a standard argument
implies that you can label the vertices so that every edge is covered.
The usual condition of soundness required is that the number of edges 
covered is a small fraction of the edges, for every label assignment.
The total disagreement is stronger than that.
It states that for any assignment $\phi_A$, no matter how we set $\phi_B$ almost all of vertices of $B$ will have at most one incident edge satisfied.

In the rest of this subsection the goal is to show (list) agreement soundness error of bounded degree $\lc$ instances.
First, we use the following lemma (we do not alter its proof).

\begin{lemma}[Combinatorial construction]
\label{lem:combinatorial}
For $0 < \epsilon < 1$, for a prime power $D$, and $\Delta$ that is a power of $D$,
there is an explicit construction of a regular bipartite graph $H = (U,V,E)$ with $|U| = n$, $V$-degree $D$, and
$V \le n^{O(1)}$ that satisfies the following. 
For every partition $U_1, \ldots, U_\ell$ of $U$ into sets such that $|U_i| \le \epsilon |U|$, 
for $i = 1, \ldots, \ell$, the fraction of vertices $v \in V$ with more than one neighbor in any single set $U_i$, is at most $\epsilon D^2$.
\end{lemma}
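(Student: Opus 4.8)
The plan is to split the argument into a purely combinatorial reduction and an explicit algebraic construction, connected by a single ``design'' hypothesis on $H$. The design hypothesis I would isolate is the following property (D): there is a number $\lambda$ such that every pair of distinct vertices $u,u'\in U$ has at most $\lambda$ common neighbours in $V$, and moreover $\lambda\,n^2\le 2D^2|V|$. Granting (D), a double count over $V$ finishes the lemma: call $v\in V$ \emph{bad} if some part $U_i$ contains two neighbours of $v$, so that $\binom{|N(v)\cap U_i|}{2}\ge 1$ for that $i$; then the number of bad vertices is at most
\[
\sum_{v\in V}\sum_{i}\binom{|N(v)\cap U_i|}{2}\;=\;\sum_{i}\sum_{\{u,u'\}\subseteq U_i}\bigl|\{v\in V:\ u,u'\in N(v)\}\bigr|\;\le\;\lambda\sum_{i}\binom{|U_i|}{2}.
\]
Since every $|U_i|\le\epsilon|U|=\epsilon n$, we get $\sum_i\binom{|U_i|}{2}\le\frac{\epsilon n}{2}\sum_i|U_i|=\frac{\epsilon n^2}{2}$, and therefore the fraction of bad $v$ is at most $\lambda\epsilon n^2/(2|V|)\le\epsilon D^2$, which is exactly the claimed bound.

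For the construction I would use affine lines over a finite field. Since $D$ is a prime power, set $q=D$, and use the hypothesis that the relevant size parameter (the quantity called $\Delta$, here equal to $n=|U|$) is a power of $D$ to write $n=q^s$. Take $U=\mathbb{F}_q^{\,s}$ as the point set, so $|U|=n$, and let $V$ be the collection of all affine lines $\{p+tv:\ t\in\mathbb{F}_q\}$ (with $v\ne 0$), with $E$ the point--line incidence relation. Each line contains exactly $q=D$ points, so $H$ is $D$-regular on the $V$-side; each point lies on exactly $(q^s-1)/(q-1)$ lines, so $H$ is also regular on the $U$-side; and the number of lines is $|V|=\binom{n}{2}/\binom{q}{2}=\frac{n(n-1)}{D(D-1)}\le n^2$, so $|V|\le n^{O(1)}$. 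Finally, two distinct points of $\mathbb{F}_q^{\,s}$ lie on a \emph{unique} common line, so (D) holds with $\lambda=1$; and $|V|=\frac{n(n-1)}{D(D-1)}\ge\frac{n^2}{2D^2}$ for all $n\ge 2$, so $\lambda n^2\le 2D^2|V|$ as required (the resulting bound is in fact about $\epsilon D^2/2$, comfortably within $\epsilon D^2$). Explicitness is immediate, since field arithmetic and enumeration of lines take polynomial time.

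The mathematics here is light; the step that needs the most care is the bookkeeping of exact parameters. One must have $|U|$ equal to $n$ \emph{on the nose} — this is precisely the role of the assumption that the size parameter is a genuine power of $D$; if $n$ fell strictly between two consecutive powers of $D$ one would restrict to the largest power below $n$ and carry through a bounded multiplicative loss — and one must have the $V$-side degree equal to $D$ exactly, which the line construction delivers by design. A non-constructive alternative (a uniformly random $D$-regular bipartite graph) also meets the counting inequality in expectation but is not explicit, which is why the algebraic construction is the one to present. This is the construction of Moshkovitz, whose proof, as the paper notes, is used without modification.
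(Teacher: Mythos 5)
Your proof is correct, and it is essentially the construction the paper relies on: the paper gives no proof of this lemma, deferring to Moshkovitz's explicit construction, which is exactly your affine-lines-over-$\mathbb{F}_D$ graph (with $|U|=\Delta=n=D^s$, $V$ the set of lines, $\lambda=1$) together with the pair-counting/union-bound estimate giving the fraction roughly $\epsilon\binom{D}{2}\le\epsilon D^2$. Your reading of the slightly garbled statement, identifying the parameter $\Delta$ with $|U|=n$ so that $n$ is a power of $D$, is also the intended one.
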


It is rather trivial to show the above lemma by a probabilistic method.
Moshkovitz showed in \cite{Moshkovitz15} that such graphs 
can be constructed deterministically via a simple and elegant construction.

In the next lemma, we show how to take a $\lc$ instance with standard soundness 
and convert it to a $\lc$ instance with total disagreement soundness, by combining it with the graph from Lemma~\ref{lem:combinatorial}.
Here (as opposed to \cite{Moshkovitz15}) we have to bound the size of the created instance more carefully 
(\cite{Moshkovitz15} only states that the size is raised to a constant power).

\begin{lemma}
\label{lem:agreement}
Let $D \ge 2$ be a prime power and let $\Delta$ be a power of $D$. Let $\epsilon > 0$. From a $\lc$
instance with soundness error $\epsilon^2 D^2$ and $B$-degree $n$, we can construct a $\lc$ instance with
 agreement soundness error $2\epsilon D^2$ and $B$-degree $D$. 
The transformation preserves the alphabets, and the size of the created instance is increased by 
a factor $\poly(\Delta)$, namely by polynomial in the original $B$-degree.
\end{lemma}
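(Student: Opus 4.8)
The plan is to follow Moshkovitz's reduction that combines the given \lc instance $\cG$ with the explicit graph $H$ from Lemma~\ref{lem:combinatorial}, but to track the size blow-up quantitatively rather than absorbing it into a constant power. Concretely: given $\cG = (G=(A,B,E),\Sigma_A,\Sigma_B,\Phi)$ with $B$-degree $n$ and standard soundness error $\epsilon^2 D^2$, apply Lemma~\ref{lem:combinatorial} with parameter $\epsilon$ to obtain a bipartite graph $H=(U,V,E_H)$ where $U$ is identified with the set of $n$ edges incident to a given $b\in B$ (so $|U|=n$), $H$ has $V$-degree $D$, and $|V| \le n^{O(1)}$. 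The new instance $\cG'$ has vertex set $A$ unchanged on one side; on the other side, each $b\in B$ is replaced by the $|V|$ copies $V_b$, and a copy $v\in V_b$ is connected to exactly the $D$ vertices $a\in A$ indexed by the $D$ neighbors of $v$ in $H$. The projection constraint on edge $(a,v)$ is inherited from the edge $(a,b)\in E$. Thus the new $B$-degree is $D$, the alphabets $\Sigma_A,\Sigma_B$ are untouched, and the number of $B$-side vertices is multiplied by $|V| \le n^{O(1)}$, while the number of edges goes from $|E|$ to $|E|\cdot D$ (each original edge $(a,b)$ survives in one copy per neighbor of the corresponding $U$-vertex in $H$). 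Taking $\Delta$ to be a power of $D$ at least $|V|$, the size increases by a factor $\poly(\Delta)$, i.e.\ polynomial in the original $B$-degree $n$ — this is the quantitative bound we want, and it is exactly the place where we must be more careful than \cite{Moshkovitz15}.

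Next I would verify agreement soundness. Suppose $\phi$ is unsatisfiable; fix any $\phi_A:A\to\Sigma_A$. For each original $b\in B$, partition its $n$ incident edges (= the set $U$) into classes $U_1,\dots,U_\ell$ according to the value $\pi_e(\phi_A(a))\in\Sigma_B$ they project to. By the standard soundness error $\epsilon^2 D^2$ of $\cG$, for a $1-\epsilon^2 D^2 \ge 1-\epsilon$ (for the relevant parameter range) fraction of $b$, no single label is the common projection of more than an $\epsilon$ fraction of the edges at $b$; equivalently each class $U_i$ has $|U_i|\le\epsilon|U|$. Wait — more precisely I should set things up so that "standard soundness error $\epsilon^2D^2$" means at most an $\epsilon$-fraction of edges at a typical $b$ can be simultaneously satisfied, hence each projection class is small; I would state the precise quantitative relationship in the proof. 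For such a "good" $b$, Lemma~\ref{lem:combinatorial} applied to this partition says at most an $\epsilon D^2$ fraction of the copies $v\in V_b$ have two neighbors in the same class $U_i$ — but having two neighbors $a_1,a_2$ in the same class is exactly the event that $\pi_{e_1}(\phi_A(a_1))=\pi_{e_2}(\phi_A(a_2))$, i.e.\ the failure of total disagreement at $v$. So for a good $b$, at least a $1-\epsilon D^2$ fraction of the $V_b$ copies are in total disagreement. Counting the bad $b$'s (at most an $\epsilon$ fraction, hence at most $\epsilon$ fraction of all copies since each $b$ contributes equally many copies), a union bound gives total disagreement on at least a $1 - \epsilon - \epsilon D^2 \ge 1 - 2\epsilon D^2$ fraction of the new $B$-side vertices, which is the claimed agreement soundness error $2\epsilon D^2$.

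I would finish with the completeness direction, which is immediate: if $\phi$ is satisfiable, take the satisfying labeling of $\cG$; the same $\phi_A$ together with the inherited $\phi_B$ (copied to every $v\in V_b$) satisfies every edge of $\cG'$, since every edge of $\cG'$ projects to an edge of $\cG$ that is covered. The main obstacle — and the only real departure from \cite{Moshkovitz15} — is the bookkeeping in the first paragraph: one must choose $\Delta$ (a power of $D$) just barely large enough that $H$ from Lemma~\ref{lem:combinatorial} exists with $|V|\le\poly(\Delta)$, and then confirm that both the vertex count and the edge count of $\cG'$ grow only by $\poly(\Delta)=\poly(n)$, rather than invoking the cruder "raised to a constant power" bound. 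Everything else is a direct translation of the partition-into-projection-classes argument through the combinatorial guarantee of Lemma~\ref{lem:combinatorial}.
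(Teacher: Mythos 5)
Your construction and the use of Lemma~\ref{lem:combinatorial} via the partition of $U$ into projection classes $U_\sigma$ are exactly the paper's; the only real difference is that the paper runs the soundness count in contrapositive (if more than a $2\epsilon D^2$ fraction of the new $B$-vertices fail total disagreement, then at least an $\epsilon D^2$ fraction of the original $b$'s are ``good'', each good $b$ has a champion class of size $>\epsilon|U|$, and assigning champions yields an assignment satisfying more than $\epsilon^2D^2$ of the original edges, contradicting soundness), whereas you argue forward. Your forward version does work, but the one step you hedged is stated with the wrong constant: from soundness error $\epsilon^2D^2$ you cannot conclude that a $1-\epsilon^2D^2\ge 1-\epsilon$ fraction of the $b$'s have all classes of size at most $\epsilon|U|$. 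What you actually get is an averaging/Markov bound: letting $M_b$ be the largest class fraction at $b$, assigning each $b$ its largest class as a label shows $\mathbb{E}_b[M_b]\le\epsilon^2D^2$, so the fraction of ``bad'' $b$ with $M_b>\epsilon$ is at most $\epsilon D^2$ (not $\epsilon$). With that correction your union bound reads $1-\epsilon D^2-\epsilon D^2=1-2\epsilon D^2$, which is exactly the claimed agreement soundness, so the lemma goes through; this Markov step is precisely the mirror image of the paper's champion argument.

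Two smaller bookkeeping points, relevant since careful size accounting is the whole point of restating this lemma: the number of edges grows by the $U$-degree of $H$, which is $|V|D/n$ by regularity, not by $D$ as you wrote (each original edge $(a,b)$ is duplicated once per $H$-neighbor of its $U$-vertex); and $\Delta$ should be taken as (a power of $D$ comparable to) the original $B$-degree $n=|U|$, with $|V|\le n^{O(1)}=\poly(\Delta)$ coming from Lemma~\ref{lem:combinatorial}, rather than choosing $\Delta$ after the fact to dominate $|V|$. Neither affects the final $\poly(\Delta)$ blow-up, which matches the paper's conclusion that the new instance is larger by a factor $|V|=\poly(\Delta)$.
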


\begin{proof}
Let $\cG = (G = (A,B,E), \Sigma_A, \Sigma_B, \Phi)$ be the original $\lc$
from the Projection Game Conjecture.
Let $H = (U,V,E_H)$ be the graph from Lemma~\ref{lem:combinatorial}, 
where $\Delta$, $D$ and $\epsilon$ are as given in the current lemma. 
Let us use $U$ to enumerate the neighbors of a $B$-vertex, i.e., there 
is a function $E^{\leftarrow} : B \times U \to A$ that, 
given a vertex $b \in B$ and $u \in U$,
gives us the $A$-vertex which is the $u$ neighbor of $b$.

We create a new $\lc$ $(G = (A, B \times V, E'), \Sigma_A, \Sigma_B, \Phi')$.
The intended assignment to every vertex $a \in A$ is the same as its assignment 
in the original instance. 
The intended assignment to a vertex $\langle b,v \rangle \in B \times V$ is the same as the assignment 
to $b$ in the original game.
We put an edge $e' = (a, \langle b,v \rangle)$ if there exist $u\in U$
such that $E^{\leftarrow}(b,u) = a$ and $(u,v) \in E_H$.
We define $\pi_{e'} = \pi_{(a,b)}$.

If there is an assignment to the original instance that satisfies $c$ fraction of its edges, then the
corresponding assignment to the new instance satisfies $c$ fraction of its edges (this follows from the regularity of the graph $H$).

Suppose there is an assignment for the new instance $\phi_A: A \to \Sigma_A$ in which more than $2\epsilon D^2$ fraction
of the vertices in $B \times V$ do not have total disagreement.

Let us say that $b \in B$ is good if for more than an $\epsilon D^2$ fraction of the vertices in 
$\{b\} \times V$ the $A$-vertices do not totally disagree.
Note that the fraction of good $b \in B$ is at least $\epsilon D^2$.

Focus on a good $b \in B$. Consider the partition of $U$ into $|\Sigma_B|$ sets, where the set corresponding to $\sigma \in \Sigma_B$ is:
$$U_{\sigma} = \{u \in U | a = E^{\leftarrow}(b,u) \wedge e = (a,b) \wedge \pi_e(\phi_A(a)) = \sigma\}\,.$$

By the goodness of $b$ and the property of $H$, there must be $\sigma \in \Sigma_B$ such that $|U_\sigma| > \epsilon |U|$.
We call $\sigma$ the “champion” for $b$.

We define an assignment $\phi_B: B\to \Sigma_B$ that assigns good vertices $b$ their {\em champions}, and other vertices
$b$ arbitrary values. The fraction of edges that $\phi_A, \phi_B$ satisfy in the original instance is at least $\epsilon^2 D^2$.

The new instance is bigger by a factor $|V|$, which is $\poly(\Delta)$.
\end{proof}

Next we consider a variant of $\lc$ that is relevant for the reduction to 
$\setcover$.
In this variant, the prover is allowed to assign each vertex $\ell$ values, and an agreement 
is interpreted as agreement on one of the assignments in the list.

\begin{definition}[List total disagreement \cite{Moshkovitz15}]
Let $(G=(A,B,E), \Sigma_A, \Sigma_B, \Phi)$ be a \lc.
Let $\ell \ge 1$.
Let $\hat{\phi}_A :A \to \binom{\Sigma_A}{\ell}$ be an assignment
that assigns each $A$-vertex $\ell$ alphabet symbols. 
We say that the $A$-vertices {\em totally disagree} 
on a vertex $b \in B$ if there are no two neighbors
$a_1, a_2 \in A$ of b, for which there exist $\sigma_1 \in \hat{\phi}_A(a_1), \sigma_2 \in \hat{\phi}_A(a_2)$ such that
$$\pi_{e_1}(\sigma_1) = \pi_{e_2}(\sigma_2)\,,$$
 where $e_1 = (a_1,b), e_2 = (a_2,b) \in E$.
\end{definition}

\begin{definition}[List agreement soundness \cite{Moshkovitz15}]
Let $(G = (A,B,E), \Sigma_A, \Sigma_B, \Phi)$ be a \lc for
deciding membership whether a Boolean formula $\phi$ is satisfiable. 
We say that $G$ has {\em list-agreement soundness error} $(\ell, \epsilon)$, if for unsatisfiable $\phi$, for any assignment $\hat{\phi}_A : A \to \binom{\Sigma_A}{\ell}$,
the A-vertices are in total disagreement on at least $1-\epsilon$ fraction of the $b \in B$.
\end{definition}

If a PCP has low error $\epsilon$, then even when the prover is allowed to assign each $A$-vertex $\ell$ values, the game is still sound. 
This is argued in the next corollary. 

\begin{lemma}[Lemma 4.7 of \cite{Moshkovitz15}]
\label{lem:list}
Let $\ell \ge 1$, $0 < \epsilon' < 1$.
Any instance of \lc with agreement soundness error $\epsilon'$ has list-agreement soundness error $(\ell, \epsilon' \ell^2)$.
\end{lemma}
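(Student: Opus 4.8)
The plan is to prove Lemma~\ref{lem:list} by a direct reduction: given a list assignment $\hat{\phi}_A : A \to \binom{\Sigma_A}{\ell}$ that witnesses non-total-disagreement on more than an $\epsilon'\ell^2$ fraction of $B$-vertices, produce an ordinary (single-valued) assignment $\phi_A : A \to \Sigma_A$ that witnesses non-total-disagreement on more than an $\epsilon'$ fraction of $B$-vertices, contradicting the hypothesis of agreement soundness error $\epsilon'$. The natural way to do this is the standard probabilistic argument: for each $a \in A$ independently, pick $\phi_A(a)$ uniformly at random from the $\ell$ symbols in $\hat{\phi}_A(a)$.

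First I would fix a $B$-vertex $b$ on which the $A$-vertices do \emph{not} totally disagree under $\hat{\phi}_A$. By definition this gives two neighbors $a_1, a_2$ of $b$ and symbols $\sigma_1 \in \hat{\phi}_A(a_1)$, $\sigma_2 \in \hat{\phi}_A(a_2)$ with $\pi_{e_1}(\sigma_1) = \pi_{e_2}(\sigma_2)$. Then I would lower-bound the probability that the random single-valued assignment ``catches'' this pair: the event $\phi_A(a_1) = \sigma_1$ has probability $1/\ell$, and independently $\phi_A(a_2) = \sigma_2$ has probability $1/\ell$ (if $a_1 = a_2$ one has to be slightly careful, but then $\sigma_1 = \sigma_2$ may be forced by the projection equality, or one simply notes the probability is at least $1/\ell^2$ in all cases). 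Hence with probability at least $1/\ell^2$ the $A$-vertices fail to totally disagree on $b$ under $\phi_A$ as well.

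Next I would take expectations over the random choice of $\phi_A$. Let $X$ be the fraction of $b \in B$ on which the $A$-vertices do not totally disagree under $\phi_A$. Summing the per-$b$ bound over the more-than-$\epsilon'\ell^2$ fraction of $b$ that are ``good'' for $\hat{\phi}_A$, linearity of expectation gives $\mathbb{E}[X] > (\epsilon'\ell^2)\cdot(1/\ell^2) = \epsilon'$. Therefore there exists a concrete choice of $\phi_A$ for which more than an $\epsilon'$ fraction of $B$-vertices are not in total disagreement. This contradicts the assumed agreement soundness error $\epsilon'$, proving the contrapositive and hence the lemma.

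The argument is almost entirely routine; the only place requiring genuine care is the handling of the case $a_1 = a_2$ in the definition of non-total-disagreement (two ``neighbors'' of $b$ that coincide), and making sure the independence used in the product $1/\ell \cdot 1/\ell$ is legitimate — i.e. that the witnessing pairs for different $b$'s may share vertices but the randomness over $A$ is chosen once and globally, so independence is only needed \emph{within} a single pair. Since the per-$b$ events are not independent across $b$, I deliberately avoid any concentration claim and use only the first-moment (expectation) bound, which is all that is needed. This matches the statement, which only asserts existence of a bad assignment, not a high-probability statement.
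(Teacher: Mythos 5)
Your argument is correct: the contrapositive plus the first-moment bound (pick one symbol uniformly at random from each list, note that each non-disagreeing $b$ survives with probability at least $1/\ell^2$, then apply linearity of expectation) is exactly the standard proof of this fact. Note that the paper itself does not prove this lemma at all --- it imports it verbatim as Lemma~4.7 of the cited Moshkovitz paper --- and your probabilistic selection argument is essentially the proof given there, so there is no divergence to report. Your worry about $a_1=a_2$ is moot under the intended reading of the definition (two \emph{distinct} neighbors of $b$; otherwise no vertex could ever be in total disagreement even for single-valued assignments), so the independence giving the $1/\ell^2$ factor is legitimate.
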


The following corollary summarizes this subsection.

\begin{corollary}
\label{cor:lc}
For any $\ell = \ell(n) = \polylog(n)$, for any constant prime power $D$ and constant $0 <\alpha <1$,
3-\sat on input of size $n$ can be reduced to a $\lc$ instance of size $N = n^{1+o(1)}$ with alphabet
size $\polylog(n)$, where the $B$-degree is $D$, and the list-agreement soundness error is $(\ell,\alpha)$.
\end{corollary}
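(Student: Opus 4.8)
The plan is to assemble Corollary~\ref{cor:lc} by chaining together the three ingredients developed in this section---Conjecture~\ref{conj:pgc2} (the degree-bounded PGC variant), Lemma~\ref{lem:agreement} (conversion to agreement soundness with bounded $B$-degree), and Lemma~\ref{lem:list} (lifting agreement soundness to list-agreement soundness)---while being careful that the size of the \lc instance is only $n^{1+o(1)}$ at each stage. First I would invoke Conjecture~\ref{conj:pgc2} with a suitable soundness parameter $\epsilon_0 = 1/\polylog(n)$ to get a bi-regular \lc instance of size $n^{1+o(1)}$ whose alphabet size and all degrees are $\polylog(n)$. In particular, the $B$-degree is some $n' = \polylog(n)$; note that here the ``$n$'' playing the role of the original $B$-degree in Lemma~\ref{lem:agreement} is itself only polylogarithmic in the true input size, which is exactly what keeps the size blow-up under control.

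Next I would apply Lemma~\ref{lem:agreement}. Given the target constant prime power $D$, I would choose $\Delta$ to be an appropriate power of $D$ that is at least the current $B$-degree $n' = \polylog(n)$ (so $\Delta = \polylog(n)$ as well), and pick $\epsilon$ in Lemma~\ref{lem:agreement} so that the required input soundness $\epsilon^2 D^2$ is matched by the soundness $\epsilon_0$ coming from the PGC instance; since $D$ is a constant this forces $\epsilon = \Theta(\sqrt{\epsilon_0}) = 1/\polylog(n)$. The lemma then produces a \lc instance with $B$-degree exactly $D$ and agreement soundness error $\epsilon' := 2\epsilon D^2 = 1/\polylog(n)$, with the size multiplied by $\poly(\Delta) = \polylog(n)$, hence still $n^{1+o(1)}$, and with the alphabet unchanged (still $\polylog(n)$).

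Finally I would apply Lemma~\ref{lem:list} with the given list size $\ell = \ell(n) = \polylog(n)$: the agreement-soundness-$\epsilon'$ instance becomes a list-agreement-soundness-$(\ell, \epsilon' \ell^2)$ instance, with no change to the graph, degrees, alphabet, or size. It remains to arrange $\epsilon' \ell^2 \le \alpha$. Since $\ell = \polylog(n)$ and $\alpha$ is a constant, I would go back and choose the soundness parameter in Conjecture~\ref{conj:pgc2} small enough---still of the form $1/\polylog(n)$, e.g. $\epsilon_0 = 1/(\text{large enough power of }\log n)$, which is permitted since the conjecture allows any $\epsilon \ge 1/n^c$---so that after the two multiplicative losses (the factor $\ell^2 D^2$-type blowups from Lemmas~\ref{lem:list} and~\ref{lem:agreement}) the final error is below the constant $\alpha$. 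Tracking that all the polylogarithmic factors compose to something still polylogarithmic, and hence that the size stays $n^{1+o(1)}$, is the only thing requiring care; this is the step I expect to be the main (though routine) obstacle, precisely because Moshkovitz's original argument only claimed the size is raised to a constant power and we need the sharper $n^{1+o(1)}$ bound, which is why Lemma~\ref{lem:agreement} was re-proved above with an explicit $\poly(\Delta)$ size bound.
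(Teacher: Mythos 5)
Your proposal is correct and follows exactly the paper's route: start from Conjecture~\ref{conj:pgc2} with soundness $1/\polylog(n)$ chosen small enough that the final error after the $\ell^2$ and $D$-dependent losses is below $\alpha$, then apply Lemma~\ref{lem:agreement} (with $\Delta$ the polylogarithmic $B$-degree, so the size blow-up is only $\poly(\Delta)=\polylog(n)$) and Lemma~\ref{lem:list}. The paper's own proof is just a terser version of this same chain of parameter settings.
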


\begin{proof}
Our starting point is the $\lc$ instance from~\ref{conj:pgc2} with soundness error $\epsilon$,
such that $2\sqrt{\epsilon} \cdot l^2 \le \alpha$. 
Note that the $B$-degree of the instance is $\Delta=\polylog(n)$.
The corollary then follows by invoking Lemma~\ref{lem:agreement} and Lemma~\ref{lem:list}.
\end{proof}

\subsection{From $\lc$ to $\setcover$}

\begin{lemma}[Partition System \cite{Moshkovitz15}]
\label{lem:partition}
For natural numbers $m$, $D$, and $0 < \alpha < 1$, for all $u \ge (D^{\Oh(\log D)} \log m)^{1/\alpha}$,
there is an explicit construction of a universe $U$ of size $u$ and partitions $\cP_1, \ldots, \cP_m$ of $U$
into $D$ sets that satisfy the following: there is no cover of $U$ with $\ell = D \ln |U| (1-\alpha)$ sets $S_{i_1}, \ldots, S_{i_\ell}$,
$1 \le i_1 < \ldots < i_{\ell} \le m$, such that each set $S_{i_j}$ belongs to the partition $P_{i_j}$.
\end{lemma}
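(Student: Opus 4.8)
The plan is to construct the partition system by a direct probabilistic argument and then argue that the failure probability is strictly less than $1$, so that a good system exists; the ``explicit'' construction then follows from Moshkovitz's derandomization, which I would cite rather than reprove. First I would set up the random experiment: fix the universe $U$ of size $u$, and for each partition index $p \in \{1,\dots,m\}$ independently choose a uniformly random function $f_p : U \to \{1,\dots,D\}$, letting the $j$-th block of $\cP_p$ be $f_p^{-1}(j)$. A candidate ``bad cover'' is a choice of $\ell = D\ln u\,(1-\alpha)$ blocks, one from each of $\ell$ distinct partitions $P_{i_1},\dots,P_{i_\ell}$, i.e.\ a choice of indices $1 \le i_1 < \dots < i_\ell \le m$ together with a selected colour $c_j \in \{1,\dots,D\}$ for each $j$. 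Such a family covers $U$ iff every element $x \in U$ satisfies $f_{i_j}(x) = c_j$ for at least one $j$.

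Next I would estimate the probability that a single fixed candidate bad cover succeeds. For a fixed element $x$, the probability that $x$ is \emph{missed} by all $\ell$ chosen blocks is $(1-1/D)^{\ell}$, and these events are independent across the $u$ elements, so the probability that the candidate covers $U$ is $\bigl(1 - (1-1/D)^{\ell}\bigr)^{u}$. Using $(1-1/D)^{\ell} \ge e^{-\ell/(D-1)} \ge e^{-\ell/D\cdot(1+o(1))}$ and plugging in $\ell = D(1-\alpha)\ln u$, this missing-probability is at least roughly $u^{-(1-\alpha)}$, so the per-element success probability is at most $1 - u^{-(1-\alpha)}$, and the whole-universe success probability is at most $\exp\!\bigl(-u \cdot u^{-(1-\alpha)}\bigr) = \exp(-u^{\alpha})$ (up to lower-order factors I would not grind through). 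Then a union bound over all candidate bad covers — there are at most $\binom{m}{\ell} D^{\ell} \le (mD)^{\ell} = \exp\bigl(\ell \ln(mD)\bigr)$ of them, and $\ell \ln(mD) = D(1-\alpha)\ln u \cdot \ln(mD)$ — gives total failure probability at most $\exp\bigl(\ell\ln(mD) - u^{\alpha}(1-o(1))\bigr)$. The point of the hypothesis $u \ge (D^{O(\log D)}\log m)^{1/\alpha}$ is exactly to make $u^{\alpha}$ dominate $\ell \ln(mD) = O(D \log u \log(mD))$: one checks $u^{\alpha} \ge D^{O(\log D)}\log m$, and since $\log u$ and $\log(mD)$ contribute only lower-order polynomial-in-log factors that can be absorbed into the $D^{O(\log D)}$ slack, the exponent is negative, so a valid partition system exists.

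The main obstacle is purely bookkeeping rather than conceptual: getting the threshold on $u$ to come out as exactly $(D^{O(\log D)}\log m)^{1/\alpha}$ requires being careful that the union-bound term $\ell\ln(mD)$, which carries an extra $\log u$ factor beyond $D\log m$, is still swallowed — this is where the seemingly wasteful $D^{O(\log D)}$ (rather than $\poly(D)$) in the hypothesis earns its keep, and also why one wants $D$ to be thought of as small (constant or polylog) in the application. I would therefore spend the bulk of the written proof on this inequality chase, and otherwise simply remark that the explicit (deterministic) construction achieving the same parameters is given by Moshkovitz~\cite{Moshkovitz15} via a reduction to a similar object over a suitable algebraic structure, so we may take the construction to be explicit as stated.
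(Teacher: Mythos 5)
The paper never proves this lemma at all---it is imported verbatim from Moshkovitz~\cite{Moshkovitz15}---so there is no internal argument to compare against; what matters is whether your probabilistic sketch is sound, and it has a genuine gap at its central estimate. You bound the probability that a fixed element is missed by all $\ell$ chosen blocks via $(1-1/D)^{\ell} \ge e^{-\ell/(D-1)} \ge e^{-(\ell/D)(1+o(1))}$ and conclude the miss probability is ``roughly $u^{-(1-\alpha)}$.'' The second inequality is only legitimate when $D\to\infty$: for fixed $D$ (and the lemma is invoked in this paper with $D$ a \emph{constant} prime power, cf.\ Corollary~\ref{cor:lc}) the exact exponent is $(1-\alpha)\,D\ln\bigl(\tfrac{D}{D-1}\bigr)=(1-\alpha)\bigl(1+\Theta(1/D)\bigr)$, not $(1-\alpha)(1+o(1))$. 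Hence the expected number of uncovered elements is $u^{\alpha-(1-\alpha)\Theta(1/D)}$ rather than $u^{\alpha}$, and your union bound closes only when $\alpha\gg 1/D$. This is not mere bookkeeping: when $\alpha$ is small compared to $1/D$ the conclusion cannot be rescued by \emph{any} construction. For $D=2$, the requirement that no $\ell$ sets from distinct partitions cover $U$ is exactly $\ell$-wise qualitative independence of the $m$ bipartitions, which forces $u\ge 2^{\ell}$; with $\ell=2(1-\alpha)\ln u$ this is impossible once $\alpha<1-\tfrac{1}{2\ln 2}\approx 0.28$, no matter how large $u$ is. More generally, a counting argument (each element ``handles'' a $((D-1)/D)^{\ell}$ fraction of the forbidden patterns) forces $u\ge (D/(D-1))^{\ell}$, i.e.\ roughly $\alpha\gtrsim 1/(2D)$. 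So the lemma carries an implicit trade-off between $D$ and $\alpha$ (this is the role of the Feige-style error term that vanishes only as $D$ grows), and your sketch buries exactly that constraint inside an unjustified $(1+o(1))$.

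If you add the hypothesis that $D$ is sufficiently large in terms of $\alpha$ (say $\alpha\ge c/D$ for a suitable constant) and track the exponent $(1-\alpha)\bigl(1+\Theta(1/D)\bigr)$ honestly, the remainder of your plan---independence of the miss events across elements, the union bound over the at most $(mD)^{\ell}$ candidate bad covers, and absorbing the $\ln u\cdot\ln(mD)$ factors into the $D^{\Oh(\log D)}$ slack in the lower bound on $u$---is the standard Feige/Moshkovitz existence argument and would go through. Deferring explicitness to Moshkovitz's derandomization is reasonable, though at that point one could simply cite the lemma wholesale, which is precisely what the paper does.
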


We will use the contrapositive of the lemma: if U has a cover of size at most $\ell$, then this cover
must contain at least two sets from the same partition. Next follows the reduction, which is almost the same
as in \cite{Moshkovitz15}, where the only difference is the parameter setting.

We take a $\lc$ instance $\cG$ from Corollary~\ref{cor:lc} and transform it into an instance of $\setcover$.
In order to do so, we invoke Lemma~\ref{lem:partition} with $m = |\Sigma_B|$ and $D$ which is the $B$-degree of $\cG$.
The parameter $u$ will be determined later.
Let $U$ be the universe, and $\cP_{\sigma_1}, \ldots, \cP_{\sigma_m}$ be the partitions of $U$,
where the partitions are indexed by symbols of $\Sigma_B$.
The elements of the \setcover instance are $B \times U$, i.e., for each vertex $b \in B$ there is a copy of $U$.
Covering $\{b\} \times U$ corresponds to satisfying the edges that touch $b$.
There are $m$ ways to satisfy the edges that touch $b$ -- one for every possible assignment $\sigma \in \Sigma_B$ to $b$.
The different partitions covering $U$ correspond to those different assignments.

For every vertex $a \in A$ and an assignment $\sigma \in \Sigma_A$ to $a$, we have a set $S_{a,\sigma}$ in the $\setcover$ instance.
Taking $S_{a,\sigma}$ to the cover would correspond to assigning $\sigma$ to $a$. Notice that a cover might consist of
several sets of the form $S_{a, \cdot}$ for the same $a \in A$, which is the reason we consider list agreement. 
The set $S_{a,\sigma}$ is a union of subsets, one for every edge $e = (a,b)$ touching $a$. 
Suppose $e$ is the $i$-th edge coming into $b$ ($1 \le i \le D$), then the subset associated 
with $e$ is $\{b\} \times S$, where $S$ is the $i$-th subset of the partition $\cP_{\Phi_e(\sigma)}$.

If we have an assignment to the $A$-vertices such that all of the neighbors of $b$ agree on one value for $b$, 
then the $D$ subsets corresponding to those neighbors and their assignments form a partition that covers
$b$’s universe. On the other hand, if one uses only sets that correspond to totally disagreeing assignments
to the neighbors, then by the definition of the partitions, covering $U$ requires $\approx \ln |U|$ times more sets.
The formal claim proved by Moshkovitz is as follows.

\begin{claim}[Claim 4.10 of \cite{Moshkovitz15}]
\label{claim:reduction}
The following holds
\begin{itemize}
\item Completeness: If all the edges in $\cG$ can be satisfied, then the created instance admits a set cover of size $|A|$.
\item Soundness: Let  $\ell := |D| \ln |U|(1-\alpha)$ be as in Lemma~\ref{lem:partition}. If $\cG$ has agreement soundness $(\ell,\alpha)$, then
every set cover of the created instance is of size more than $|A|\ln |U|(1-2\alpha)$.
\end{itemize}
\end{claim}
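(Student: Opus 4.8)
The plan is to verify the two bullets separately, in both cases tracking the parameter $\ell = D\ln|U|(1-\alpha)$ from Lemma~\ref{lem:partition} and exploiting the structure of the sets $S_{a,\sigma}$ built from the partition system.

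For \emph{completeness}, suppose $\phi$ is satisfiable, so there is a labeling $(\sigma_A,\sigma_B)$ of $\cG$ covering every edge. I would take the cover $\{S_{a,\sigma_A(a)} : a \in A\}$, of size exactly $|A|$, and argue it covers $B\times U$. Fix $b\in B$; since every edge into $b$ is covered, every neighbor $a$ of $b$ satisfies $\Phi_{(a,b)}(\sigma_A(a)) = \sigma_B(b)$, so the subset of $S_{a,\sigma_A(a)}$ associated with the edge $(a,b)$ is exactly the $i$-th block of the \emph{single} partition $\cP_{\sigma_B(b)}$, where $i$ ranges over $1,\dots,D$ as $a$ ranges over the $D$ neighbors of $b$. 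These $D$ blocks are precisely the partition $\cP_{\sigma_B(b)}$ of $U$, hence they cover $\{b\}\times U$. Doing this for all $b$ gives the claimed cover of size $|A|$.

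For \emph{soundness}, I argue the contrapositive: given a set cover $\mathcal{C}$ of the created instance of size at most $|A|\ln|U|(1-2\alpha)$, I build a list-assignment $\hat\phi_A : A \to \binom{\Sigma_A}{\ell}$ witnessing failure of $(\ell,\alpha)$ list-agreement soundness, i.e.\ one for which \emph{more than} an $\alpha$ fraction of $b\in B$ are \emph{not} in total disagreement. First, since each chosen set has the form $S_{a,\sigma}$, for each $a\in A$ let $L_a = \{\sigma : S_{a,\sigma}\in\mathcal{C}\}$; by averaging, the number of $a$ with $|L_a| > \ell$ is less than $|A|\ln|U|(1-2\alpha)/\ell = |A|(1-2\alpha)/\big(D(1-\alpha)\big)$, which I would bound to be at most $\alpha|A|/D$ for the relevant range of $\alpha$ (a routine inequality). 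Call an $a$ \emph{overloaded} if $|L_a|>\ell$; define $\hat\phi_A(a)$ to be $L_a$ truncated to $\ell$ symbols (arbitrary if $L_a=\emptyset$). Call $b\in B$ \emph{bad} if some neighbor of $b$ is overloaded; since each overloaded $a$ has $D$ neighbors and $\cG$ is $B$-regular of degree $D$, the fraction of bad $b$ is at most $\alpha$ (after I relate $|A|$ and $|B|$ through bi-regularity — this is where I must be slightly careful). For every $b$ that is not bad, I claim the $A$-vertices do \emph{not} totally disagree on $b$ under $\hat\phi_A$: indeed $\{b\}\times U$ is covered by $\mathcal{C}$, using only blocks coming from sets $S_{a,\sigma}$ with $a\sim b$ and $\sigma\in L_a = \hat\phi_A(a)$ (no truncation lost anything here); the block from such a set is the appropriate block of the partition $\cP_{\Phi_{(a,b)}(\sigma)} = \cP_{\pi_{(a,b)}(\sigma)}$. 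If all these projections $\pi_{(a,b)}(\sigma)$ were distinct, the blocks used would come from $\le\ell$ distinct partitions with at most... — wait, I need the count right: by Lemma~\ref{lem:partition}'s contrapositive, covering $\{b\}\times U$ with fewer than $\ell = D\ln|U|(1-\alpha)$ sets forces two blocks from the same partition $\cP_\sigma$; having two blocks from $\cP_\sigma$, say coming from neighbors $a_1,a_2$ with labels $\sigma_1\in\hat\phi_A(a_1),\sigma_2\in\hat\phi_A(a_2)$, means $\pi_{(a_1,b)}(\sigma_1) = \sigma = \pi_{(a_2,b)}(\sigma_2)$, which is exactly a disagreement-violating pair. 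So I must ensure the number of cover-sets touching $\{b\}\times U$ is indeed $<\ell$; summing over all $b$ and using that each $S_{a,\sigma}$ touches exactly $D$ copies $\{b\}\times U$, the average number of sets touching a given copy is $D|\mathcal{C}|/|B| \le D|A|\ln|U|(1-2\alpha)/|B|$, and since bi-regularity gives $D|A| = (\text{$A$-degree})\cdot|B|$... I would instead make the cleaner move: the total multiset size $\sum_b \#\{S\in\mathcal{C}: S \text{ touches } \{b\}\times U\} = D|\mathcal{C}|$, so fewer than an $\alpha$ fraction of $b$ can have $\ge \ell'$ sets touching them once $|\mathcal{C}|$ is small enough, and for all remaining \emph{good} $b$ the count is $<\ell$, yielding the disagreement violation. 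Combining "bad" and "too many sets" exceptional $b$'s, strictly fewer than an $\alpha$ fraction total (adjusting constants), so strictly more than $(1-\alpha)$ fraction would need to be in total disagreement for soundness to hold — contradiction with the assumed cover size. Hence every cover has size $> |A|\ln|U|(1-2\alpha)$.

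The main obstacle is the bookkeeping in the soundness direction: I must simultaneously control (i) the fraction of $b$ whose copy of $U$ is hit by $\ge \ell$ sets of $\mathcal{C}$, via an averaging argument over the $D|\mathcal{C}|$ incidences, and (ii) the fraction of $b$ having an overloaded neighbor, via a second averaging over $A$; and I must fold both exceptional fractions into the single budget $\alpha$ by choosing the list length $\ell$ and invoking list-agreement soundness with error $\alpha$ (Corollary~\ref{cor:lc}) rather than $\alpha$ split into pieces. The cleanest presentation, which I would adopt, replaces the two $\alpha/2$-type splits by simply noting that we already pay for list agreement with a fresh small constant and rescaling at the very end; everything else is the direct partition-system argument of Moshkovitz, and the only genuine novelty over \cite{Moshkovitz15} is keeping $|U| = n^{o(1)}$ small, which affects the choice of $u$ in Lemma~\ref{lem:partition} but not the logical structure of this claim.
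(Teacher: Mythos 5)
Your completeness argument is exactly the standard one and is fine. Your soundness argument also has the right skeleton (the paper does not reprove this claim -- it quotes Moshkovitz's Claim 4.10 -- and that proof, like yours, goes via the contrapositive: extract lists $L_a=\{\sigma: S_{a,\sigma}\in\mathcal{C}\}$ from a small cover, and use the contrapositive of Lemma~\ref{lem:partition} to find, for suitable $b$, two distinct blocks from the same partition, i.e.\ two neighbors $a_1\neq a_2$ with agreeing projections). But your quantitative bookkeeping does not close. The inequality you wave through as ``routine,'' namely $|A|(1-2\alpha)/\bigl(D(1-\alpha)\bigr)\le \alpha|A|/D$, i.e.\ $(1-2\alpha)/(1-\alpha)\le\alpha$, is false precisely in the relevant regime of small $\alpha$ (the theorem is later invoked with $\alpha=\epsilon/2$); for small $\alpha$ the fraction of ``overloaded'' $a$'s, and likewise the Markov bound on the fraction of $b$'s touched by at least $\ell$ cover sets, is $(1-2\alpha)/(1-\alpha)\approx 1$, not below $\alpha$. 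So your plan of pushing both exceptional sets of $b$'s below a total $\alpha$ fraction cannot work with the given cover size $|A|\ln|U|(1-2\alpha)$, and ``rescaling at the very end'' is not available because $\ell$, the soundness parameter $\alpha$, and the target bound $(1-2\alpha)$ are all fixed by the statement. A secondary slip: a set $S_{a,\sigma}$ touches $\deg_A(a)$ copies of $U$ (the $A$-degree), not $D$ copies; bi-regularity enters through $\deg_A\cdot|A| = D\cdot|B|$.

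The missing idea is that you do not need the exceptional fraction to be small -- you only need the ``good'' fraction to exceed $\alpha$, since $(\ell,\alpha)$ list-agreement soundness forbids more than an $\alpha$ fraction of non-totally-disagreeing $b$'s. Do a single averaging: set $s_a=|L_a|$, call $b$ good if $\sum_{a\sim b}s_a\le \ell = D\ln|U|(1-\alpha)$, and compute $\sum_{b}\sum_{a\sim b}s_a=\deg_A\sum_a s_a\le D|B|\ln|U|(1-2\alpha)$, so by Markov the not-good fraction is at most $(1-2\alpha)/(1-\alpha)$ and the good fraction is at least $\alpha/(1-\alpha)>\alpha$. This one condition simultaneously guarantees that every neighbor of a good $b$ has $s_a\le\ell$ (so no truncation of lists is needed for those $b$) and that $\{b\}\times U$ is covered by fewer than $\ell$ blocks, so Lemma~\ref{lem:partition} produces the agreeing pair (noting, as you did implicitly, that two blocks from the same partition arising from the same $a$ coincide, so the pair really involves $a_1\neq a_2$). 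With that replacement your proof matches the cited argument; as written, the counting step fails.
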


The following is our main theorem, where we fine-tune the parameters to get the best possible (and thus almost tight)
running time lower bound.

\begin{theorem}
\label{thm:sc-main}
Fix a constant $\gamma > 0$ and $\epsilon > 0$.
Assuming \pgc there is an algorithm that given an instance $\phi$ of $3$-\sat of size $n$ 
one can create an instance $I$ of \setcover with universe of size $n^{1+o(1)} \cdot u$ 
such that if $\phi$ is satisfiable, then $I$ has a set cover of size $x$,
while if $\phi$ is not satisfiable, then $I$ does not admit a set cover of size at most $x \ln |u| (1-\epsilon)$.
\end{theorem}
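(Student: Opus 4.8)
The plan is to chain together the reductions developed in the preceding sections and then optimize the free parameters. Starting from a $3$-$\sat$ instance $\phi$ of size $n$, I would first invoke Corollary~\ref{cor:lc}: setting $\ell = \ell(n) = \polylog(n)$ (the precise polylog will be fixed below, driven by $\epsilon$) and a constant prime power $D$, we obtain a $\lc$ instance $\cG$ of size $N = n^{1+o(1)}$, with alphabet of size $\polylog(n)$, $B$-degree exactly $D$, and list-agreement soundness error $(\ell,\alpha)$ for a suitably small constant $\alpha$ (we will take $\alpha = \epsilon/2$ or similar so that $1 - 2\alpha \ge 1-\epsilon$). Note that the completeness is $1$: if $\phi$ is satisfiable, all edges of $\cG$ can be satisfied. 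This is the step where we must be careful that the size stays $n^{1+o(1)}$ rather than blowing up by a constant power; that bookkeeping is exactly what Lemma~\ref{lem:agreement} (with its $\poly(\Delta)$, i.e. $\polylog(n)$, blowup) buys us, since $\Delta = \polylog(n)$ and $(n^{1+o(1)})\cdot \polylog(n) = n^{1+o(1)}$.

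Next I would apply the $\lc$-to-$\setcover$ reduction described after Lemma~\ref{lem:partition}, together with Claim~\ref{claim:reduction}. Here we invoke the partition system of Lemma~\ref{lem:partition} with $m = |\Sigma_B| = \polylog(n)$ and the constant $D$, and a parameter $u$ satisfying $u \ge (D^{\Oh(\log D)}\log m)^{1/\gamma}$; since $D$ is constant and $\log m = \Oh(\log\log n)$, any $u$ that is at least $(\Oh(\log\log n))^{1/\gamma} = \polylog(n)$ works, and in fact $u$ is a genuinely free parameter we can take as large as we like (this is what feeds into the $n^{1+o(1)}\cdot u$ universe size in the statement). The universe of the resulting $\setcover$ instance $I$ is $B \times U$, so of size $N \cdot u = n^{1+o(1)}\cdot u$ as claimed. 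By the completeness part of Claim~\ref{claim:reduction}, if $\phi$ is satisfiable then $I$ has a set cover of size $x := |A| \le N = n^{1+o(1)}$. By the soundness part, if $\phi$ is not satisfiable then — using that $\cG$ has list-agreement soundness $(\ell,\alpha)$ with $\ell = D\ln|U|(1-\alpha)$ — every set cover of $I$ has size more than $|A|\ln|U|(1-2\alpha) = x\ln|U|(1-2\alpha)$.

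The remaining work is just to match up the parameters so that the soundness gap reads exactly $x \ln|u|(1-\epsilon)$ as in the statement. First, $\ln|U| = \ln u$ since $|U| = u$, so the soundness lower bound is $x\ln u (1-2\alpha)$; choosing $\alpha \le \epsilon/2$ at the very start (inside Corollary~\ref{cor:lc}, which only requires $\alpha$ to be a constant in $(0,1)$) gives $1-2\alpha \ge 1-\epsilon$, hence no set cover of size at most $x\ln u(1-\epsilon)$ exists. Second, I must check the consistency condition tying $\ell$ to $|U|$: Lemma~\ref{lem:partition} fixes $\ell = D\ln|U|(1-\alpha)$, and Corollary~\ref{cor:lc} requires $\ell = \polylog(n)$, so we need $\ln u = \polylog(n)$, i.e. $u = \exp(\polylog(n))$; this is compatible with $u$ being "determined later" and large, and with the partition-system lower bound $u \ge \polylog(n)$. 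Finally I would remark that the whole construction is a polynomial-time (indeed $n^{1+o(1)}\cdot\poly(u)$-time) algorithm, since each of Corollary~\ref{cor:lc}, Lemma~\ref{lem:agreement}, and the partition-system construction is explicit/efficient.

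The main obstacle is not any single deep step — the heavy lifting is already packaged in Corollary~\ref{cor:lc} and Claim~\ref{claim:reduction} — but rather the simultaneous parameter juggling: $D$ is a constant, $\ell$ and $m$ and the alphabet are all $\polylog(n)$, $\alpha$ is a small constant controlled by $\epsilon$, and $u$ must be large enough for the partition system yet satisfy $\ln u = \polylog(n)$ to keep $\ell$ polylogarithmic, all while the size must not exceed $n^{1+o(1)}\cdot u$. The delicate point, flagged already in the text before Lemma~\ref{lem:agreement}, is ensuring the $\lc$ size stays $n^{1+o(1)}$ through the agreement-soundness amplification rather than degrading to $n^{c}$; I would therefore spell out explicitly that the blowup factor at that step is $\poly(\Delta) = \polylog(n)$ and that $\polylog(n)\cdot n^{1+o(1)} = n^{1+o(1)}$, and similarly that multiplying the vertex count $N=n^{1+o(1)}$ by the per-vertex universe $u$ yields exactly the claimed $n^{1+o(1)}\cdot u$.
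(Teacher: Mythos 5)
Your proposal is correct and follows essentially the same route as the paper's own proof: invoke Corollary~\ref{cor:lc} with $\alpha=\epsilon/2$ and $\ell = D\ln|U|(1-\alpha)$, then apply the partition-system reduction and Claim~\ref{claim:reduction}, so completeness gives a cover of size $x=|A|$ and soundness forces any cover above $x\ln|U|(1-2\alpha)=x\ln u(1-\epsilon)$. Your extra bookkeeping (the $\poly(\Delta)=\polylog(n)$ blowup keeping the instance size $n^{1+o(1)}$, and the consistency condition $\ln u=\polylog(n)$ so that $\ell$ stays polylogarithmic) is exactly the fine-tuning the paper relies on, just spelled out more explicitly.
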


\begin{proof}
Given a sparsified 3-CNF formula $\phi$ of size $n$ we transform it into a 
$\lc$ instance $\cG$,
by Corollary~\ref{cor:lc}, obtaining a list-agreement soundness error $(\ell,\alpha)$,
where we set $\alpha=\epsilon / 2$ and $\ell = |D| \ln |U|(1-\alpha)$.
Next, we perform the reduction from this section and by Claim~\ref{claim:reduction} we have the following:
\begin{itemize}
  \item If $\phi$ is satisfiable, then there exists a solution of size $|A|$ (where $A$ is one side of $\cG$).
  \item If $\phi$ is not satisfiable, then any set cover has size more than $|A| \ln |U|(1-2\alpha) = |A| \ln |U| (1-\epsilon)$.
\end{itemize}
\end{proof}

By setting the value of $|U|=u$ appropriately we get a tradeoff between
the approximation ratio and running time in the following lower bound obtained directly 
from Theorem~\ref{thm:sc-main}.

\begin{corollary}
Unless the \ethh fails, for any $0 < \alpha < 1$ and $\epsilon > 0$ there is no $(1-\alpha) \ln n$ approximation
for \setcover with universe of size $n$ and $m$ sets in time $2^{n^{\alpha - \epsilon}} \poly(m)$.
\end{corollary}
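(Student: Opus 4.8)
The plan is to derive the corollary directly from Theorem~\ref{thm:sc-main} by choosing the size parameter $u$ of the partition system so that the total universe size matches the desired bound and the running time threshold comes out to $2^{n^{\alpha-\epsilon}}$. First I would observe that in Theorem~\ref{thm:sc-main} the $\lc$ instance produced by Corollary~\ref{cor:lc} has size $n^{1+o(1)}$, its $B$-degree is a constant prime power $D$, and the alphabet size $m = |\Sigma_B|$ is $\polylog(n)$; feeding this into Lemma~\ref{lem:partition} with $m = |\Sigma_B|$, the constraint on $u$ is $u \ge (D^{\Oh(\log D)}\log m)^{1/\alpha'}$ for the partition-system parameter $\alpha'$. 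Since $D$ is constant and $\log m = \Oh(\log\log n)$, this lower bound on $u$ is only $\polylog(n)$, so I am free to choose $u$ as large as I like. The natural choice is $u = 2^{n^{\alpha-\epsilon}}$ — more precisely I would take $u$ to be (a power of the relevant base near) $2^{\Theta(n^{\alpha-\epsilon})}$, and I would fold the $n^{1+o(1)}$ factor and any constant slack into the exponent, which is harmless since $n^{1+o(1)}\cdot 2^{n^{\alpha-\epsilon}} = 2^{n^{\alpha-\epsilon}(1+o(1))}$.

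Next I would spell out the approximation gap. By Theorem~\ref{thm:sc-main} (applied with the internal parameter, call it $\epsilon_0$, chosen small enough, e.g. $\epsilon_0$ a suitable constant fraction of $\alpha$), a satisfiable $\phi$ yields a set cover of size $x = |A|$, while an unsatisfiable $\phi$ admits no set cover of size at most $x\ln|U|(1-\epsilon_0)$. The universe of the $\setcover$ instance has size $N := n^{1+o(1)}\cdot u$, and with $u = 2^{n^{\alpha-\epsilon}}$ we get $\ln|U| = \ln N \cdot (1-o(1))$; so the hard gap is between $x$ and $x\ln N\,(1-\alpha)$ after absorbing $\epsilon_0$ and the lower-order terms into the constant in front (renaming the gap parameter to $\alpha$ as in the corollary statement). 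Hence a $(1-\alpha)\ln N$-approximation algorithm for $\setcover$ on $N$ elements would decide $3$-$\sat$ on $n$ variables.

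Finally I would cash this out as a running-time lower bound. The reduction from $\phi$ to $I$ runs in polynomial time (Corollary~\ref{cor:lc} is an efficient reduction, Lemma~\ref{lem:agreement} blows up the instance only by $\poly(\Delta) = \polylog(n)$, and the $\setcover$ construction builds $B\times U$ explicitly, which is polynomial in $N$). So an algorithm for $(1-\alpha)\ln N$-approximate $\setcover$ running in time $2^{N^{\alpha-\epsilon'}}\poly(m)$ for some $\epsilon' > 0$ would, after the reduction, solve $3$-$\sat$ in time polynomial in $N = 2^{n^{\alpha-\epsilon}(1+o(1))}$ composed with $2^{(2^{n^{\alpha-\epsilon}})^{\alpha-\epsilon'}}$; choosing $\epsilon$ relative to $\epsilon'$ so that $(\alpha-\epsilon)(\alpha-\epsilon') < 1$ — in fact $(\alpha-\epsilon)(\alpha-\epsilon')$ can be made any small constant — this is $2^{n^{o(1)}}\cdot\poly(n) = 2^{o(n)}$, contradicting $\ethh$. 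I would present it in the cleaner direction actually used in the statement: rename so that the claimed $\setcover$ running time is $2^{N^{\alpha-\epsilon}}\poly(m)$ with universe size $N$, pick $u$ so that $N^{\alpha-\epsilon} = o(n)$ while the gap stays $(1-\alpha')\ln N$ for some $\alpha'$, and conclude. The main obstacle — really the only point requiring care — is the bookkeeping of the three interacting parameters: the internal soundness/gap parameter of Theorem~\ref{thm:sc-main}, the exponent $\alpha$ in the approximation ratio, and the exponent $\alpha-\epsilon$ in the running time, together with checking that the $n^{1+o(1)}$ and $\polylog(n)$ factors never promote $2^{N^{\alpha-\epsilon}}$ past $2^{o(n)}$; once the arithmetic $N \approx 2^{n^{\alpha-\epsilon}}$ is fixed, everything else is routine.
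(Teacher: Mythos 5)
There is a genuine gap: your main parameter choice, $u = 2^{n^{\alpha-\epsilon}}$, destroys the \ethh contradiction. The running time you must rule out is measured in the universe size of the \emph{constructed} $\setcover$ instance, which under your choice is $N = n^{1+o(1)}\cdot 2^{n^{\alpha-\epsilon}} = 2^{(1+o(1))n^{\alpha-\epsilon}}$. A hypothetical $(1-\alpha)\ln N$-approximation running in time $2^{N^{\alpha-\epsilon'}}\poly(m)$ is then allowed time $2^{\bigl(2^{(1+o(1))n^{\alpha-\epsilon}}\bigr)^{\alpha-\epsilon'}} = 2^{2^{(\alpha-\epsilon')(1+o(1))n^{\alpha-\epsilon}}}$, which is doubly exponential in $n^{\alpha-\epsilon}$ and in particular vastly larger than $2^{o(n)}$; no contradiction with \ethh follows. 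Your step ``choosing $\epsilon$ relative to $\epsilon'$ so that $(\alpha-\epsilon)(\alpha-\epsilon')<1$ \ldots this is $2^{n^{o(1)}}$'' rests on the false identity $\bigl(2^{n^{a}}\bigr)^{b} = 2^{n^{ab}}$; in fact $\bigl(2^{n^{a}}\bigr)^{b} = 2^{b\,n^{a}}$, so shaving the outer exponent by a constant factor does not bring the tower down. For the argument to work you need $N^{\alpha-\epsilon} = o(n)$, which forces $u$ (and hence $N$) to be \emph{polynomial} in $n$ --- exactly the opposite of your choice.

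The paper's proof takes $u = |\phi|^{1/\alpha-1}$, so $N = n^{1/\alpha+o(1)}$ and $2^{N^{\alpha-\epsilon}} = 2^{n^{1-\epsilon/\alpha+o(1)}} = 2^{o(n)}$, which is where the \ethh contradiction comes from. But with polynomial $u$ the quantity $\ln u$ is only a constant fraction of $\ln N$, and since the gap of Theorem~\ref{thm:sc-main} is stated against $\ln u$ (the partition-system universe), not against $\ln N$, one must verify that a $(1-\alpha)\ln N$-approximation still falls inside the gap; this is the short calculation $|A| \le u^{\alpha/(1-\alpha)}$, hence $(1-\alpha)\ln(|A|\cdot u) \le (1-\alpha)\bigl(\alpha/(1-\alpha)+1\bigr)\ln u = \ln u$, and it is precisely the step your closing sentence (``pick $u$ so that $N^{\alpha-\epsilon}=o(n)$ while the gap stays $(1-\alpha')\ln N$'') gestures at but never carries out --- and which is incompatible with the exponential $u$ used in the body of your argument, where you instead exploited $\ln u = (1-o(1))\ln N$.
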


\begin{proof}
Set $u=|\phi|^{1/\alpha-1}$, then the created instance has at most $|\phi|^{1/\alpha+o(1)}$ elements,
which fits the desired running time in the lower bound.
It remains to analyze the approximation ratio.
Note that $|A| \le u^{\alpha / (1-\alpha)}$, hence 
$$(1-\alpha) \ln (|A| \cdot u) \le (1-\alpha) (\alpha / (1-\alpha)+1) \ln u = \ln u\,.$$
\end{proof}

\section{Approximating Directed Steiner Tree}

In this section, we present a $(1-\epsilon)\cdot \ln n $-approximation algorithm for \dst
running in time $2^{\Oh(n^{\alpha} \log n)}$.

\begin{lemma}
\label{lem:tree}
For any rooted tree $T$ with $\ell$ leaves, there exists
a set $X \subseteq V(T)$ of $\Oh(n^{\alpha})$ vertices
together with a family of edge disjoint trees $T_1, \ldots, T_q$,
such that:
\begin{itemize}
\item the trees are edge (but not vertex) disjoint
  \item each $T_i$ is a subtree of $T$,
  \item the root of each $T_i$ belongs to $X$,
  \item each leaf of $T$ is a leaf of exactly one $T_i$,
  \item each $T_i$ has more than $n^{\alpha}$ but less than $2n^{\alpha}$   leaves.
\end{itemize}
\end{lemma}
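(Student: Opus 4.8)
The plan is to produce the whole decomposition by one greedy ``peel from the bottom'' procedure and then check the five bullets. For a vertex $v$ and a current residual tree $R$, write $\ell_R(v)$ for the number of leaves of $R$ in the subtree of $R$ rooted at $v$; throughout I maintain the invariant that $R$ has no ``stump'' leaf, i.e.\ every leaf of $R$ is a leaf of $T$. Put $t=n^{\alpha}$. The idea is to repeatedly locate a lowest vertex that still has more than $t$ leaves below it and carve off a connected chunk carrying between $t$ and $2t$ of those leaves, whose top vertex is then placed into $X$.

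Formally, start with $R=T$ and $X=\emptyset$. While $R$ has more than $2t$ leaves, let $v$ be a deepest vertex of $R$ with $\ell_R(v)>t$ (this exists since the root of $R$ qualifies), so that every child of $v$ in $R$ carries at most $t$ leaves. \emph{Case 1: $\ell_R(v)<2t$.} Carve the subtree of $R$ rooted at $v$ off as a new $T_i$, add $v$ to $X$, delete that subtree together with $v$ from $R$, and then walk upward deleting any ancestor that has become childless (this preserves the no-stump invariant). \emph{Case 2: $\ell_R(v)\ge 2t$.} Then $v$ has at least two children in $R$, each of leaf-weight at most $t$; pack these children greedily into groups so that each group has total leaf-weight in $(t,2t)$ --- possible precisely because each weight is at most $t$, so the first time a running sum passes $t$ it is still below $2t$ --- possibly leaving one last group of total weight at most $t$. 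For every full group, carve $v$ together with the subtrees of the children in that group off as a $T_i$ (so the single vertex $v$ may be the common root of several $T_i$'s) and put $v$ into $X$ once; delete the carved subtrees from $R$, after which $\ell_R(v)\le t$. When the loop halts, $R$ has at most $2t$ leaves; if it has more than $t$, make $R$ itself the last $T_i$, rooted at the root of $T$; if at most $t$ leftover leaves remain, absorb them into the most recently carved $T_i$ that still touches $R$ (a one-off widening of that single tree to at most $3t-1$ leaves, which one can also avoid by stopping the loop one iteration earlier).

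Checking the bullets is then routine. Edge-disjointness holds because every step deletes exactly the edges it uses and no edge is reused. Every leaf of $T$ becomes a leaf of exactly one $T_i$: each while-iteration strictly decreases the leaf count of $R$ (by more than $t$), the cleanup assigns whatever is left, and the no-stump invariant guarantees the carved $T_i$'s do not acquire leaves that are not leaves of $T$. Each $T_i$ is by construction a connected subtree of $T$ whose root is the vertex we placed into $X$ for it, and its leaf count lies in $(t,2t)$ by the case analysis (up to the obvious rounding/degenerate caveats, e.g.\ $t=n^{\alpha}$ being non-integral or several children carrying exactly $t$ leaves). Finally, each iteration of the while-loop inserts at most one new vertex into $X$ while permanently removing more than $t=n^{\alpha}$ leaves from $R$, so the number of iterations, and with it $|X|$, is $\Oh(\ell/n^{\alpha})$ --- which is the asserted $\Oh(n^{\alpha})$ in the regime $\ell=\Oh(n^{2\alpha})$ in which the lemma is invoked.

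The delicate point --- essentially the only one --- is the end-of-process bookkeeping: one must ensure that peeling the deepest heavy vertex never strands a sub-threshold pocket of leaves with no legal home, which is exactly why the procedure deletes childless ancestors in Case 1, leaves behind a remainder of weight at most $t$ rather than carving greedily all the way down in Case 2, and merges (rather than keeps) the final short residual. Everything else --- existence of a deepest heavy vertex, the packing bound $(t,2t)$, edge-disjointness, and termination --- is immediate from the construction.
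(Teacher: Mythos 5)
Your proof is essentially the paper's own argument: repeatedly take the lowest vertex whose subtree carries more than $n^{\alpha}$ leaves (so all its children are light), group child subtrees into chunks with between $n^{\alpha}$ and $2n^{\alpha}$ leaves, keep the chosen vertex available as the root of possibly several trees, and place these roots in $X$; you merely spell out the endgame bookkeeping (the final sub-threshold residual) more explicitly than the paper does. Your closing caveat is also accurate on both sides: the construction really gives $|X|=\Oh(\ell/n^{\alpha})$ trees/roots (the paper's ``$\Theta(n^{\alpha})$ edge disjoint trees'' is likewise unjustified for general $\ell$), and what the algorithm actually needs from the lemma is only the existence of one subtree with $\Theta(n^{\alpha})$ leaves whose density is no worse than $\opt/\ell$.
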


\begin{proof}
As long as the tree has more than $n^{\alpha}$ leaves
do the following: pick the lowest vertex $v$ in the tree,
the subtree rooted at which has more than $n^\alpha$
leaves. This implies that all its children 
contain strictly less than $n^\alpha$ leaves.
Accumulation subtrees gives at most $2n^\alpha$ leaves
since before the last iteration there were less that $n^\alpha$ leaves, 
and the last iteration adds a tree of at most $n^\alpha$ leaves.
Remove the collected tree, but do not remove their root
(namely this root may later participate in other trees).
Note that after the accumulated trees are removed, the tree rooted by our chosen root may still have more than $n^\alpha$ leaves.
This gives $\Theta(n^\alpha)$ edge disjoint trees 
with $\Theta(n^\alpha)$ leaves each. Thus, there is a tree 
with $\Theta(n^\alpha)$ leaves, and density (cost over the number of leaves)
no larger than the optimum density.
\end{proof}
For simplicity, we make sure that the number of leaves 
in each tree is exactly $n^\alpha$ by discarding leaves.
Since the trees are edge disjoint there must be a tree 
whose density: cost over the number of leaves is not worse 
(up to a factor of $2$) than the optimum density $\opt/\ell$.

Let $(G,K,r)$ be an instance of {\dst}.
Our algorithm enumerates guesses 
the (roughly) $n^\alpha$ leaves $L'$ in 
the tree whose density is no worse than the 
optimal density, and also guesses the subset $X_{L'} \subseteq V(G)$ 
that behaves as Steiner vertices. Assuming the graph went via 
a transitive closure, the size of $X_{L'}$ is at most the size of $L'$
of size $\Oh(n^\alpha)$. For a fixed set $X$,
the algorithm first finds an optimum directed Steiner
tree $T_0$. We note that assuming that the graph went via transitive closure,
we may assume that the number of Steiner vertices is less than the number of leaves,
and so we may guess the Steiner vertices at time $n^{\sqrt n}$ as well.
It is known that given the Steiner vertices and the leaves of the tree,
we can, in polynomial time, find the best density tree 
with these leaves and these $X$ vertices. 
The first such algorithm is due to
Dreyfus and Wagner \cite{DreyfusW71}.
The algorithm is quite non trivial and 
that uses dynamic programming.
The running time of the algorithm is $O(3^{n})$ time which 
is negligible in our context.

We iterate adding more trees in this way.
Each time we find the best 
density tree rooted at some vertex of $X$
which covers $n^\alpha$ leaves, 
and add the edges 
to $S$.
Each time it requires  $2^{\Oh(n^\alpha \log n})$ time.
Finally, when there are less than $(e^2+1)n^\alpha$
unconnected terminals left, we find an optimum directed
Steiner tree for those vertices.

\begin{lemma}
The approximation ratio of the above algorithm is at most $(1-\alpha)\ln n$.
\end{lemma}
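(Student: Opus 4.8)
The plan is to bound the cost of the tree built by the greedy-by-density algorithm against the optimum directed Steiner tree cost $\opt$ via a standard covering-greedy charging argument, in the style of the analysis of the greedy algorithm for \setcover. First I would fix an optimum \dst solution $T^*$ of cost $\opt$ spanning all $|K|=\ell$ terminals. By Lemma~\ref{lem:tree} applied to $T^*$, the terminals of $T^*$ decompose into edge-disjoint subtrees each with $\Theta(n^\alpha)$ leaves, so at every moment when the algorithm has $t$ terminals still unconnected, at least one of these subtrees (restricted to the still-unconnected terminals, using transitive closure so that any surviving terminal of a subtree can still be reached from that subtree's root through edges of $T^*$) has density at most $2\opt/t$ while covering $\Theta(n^\alpha)$ new terminals. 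Since the algorithm exhaustively guesses the $\Oh(n^\alpha)$ leaves and the $\Oh(n^\alpha)$ Steiner vertices of a minimum-density tree on $n^\alpha$ leaves and finds it optimally (Dreyfus--Wagner), the tree it actually adds has density no worse than this $2\opt/t$.

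The core computation is then the harmonic sum. If the algorithm performs iterations reducing the number of unconnected terminals roughly from $\ell$ down to $(e^2+1)n^\alpha$, each iteration connecting $\Theta(n^\alpha)$ terminals at density at most (a constant times) $\opt$ divided by the current count of unconnected terminals, the total cost telescopes to at most $O(\opt) \cdot \sum_{j} \frac{n^\alpha}{t_j}$, which is $O(\opt)\cdot(\ln \ell - \ln n^\alpha) = O(\opt)\cdot(1-\alpha)\ln n$ when $\ell \le n$; the final clean-up phase connects at most $(e^2+1)n^\alpha$ terminals with an optimum Steiner tree, which costs at most $\opt$ and is absorbed. I would be careful to track the exact constants: the factor-$2$ loss from edge-disjointness in Lemma~\ref{lem:tree}, the discarding of leaves to make each subtree have exactly $n^\alpha$ leaves, and the choice of the stopping threshold $(e^2+1)n^\alpha$ are all tuned precisely so that the accumulated constant in front of the harmonic term collapses to exactly $1$ in the exponent of $n$, yielding ratio $(1-\alpha)\ln n$ rather than $O((1-\alpha)\ln n)$.

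The main obstacle I anticipate is making the density guarantee \emph{exactly} tight rather than merely up to a constant: the naive charging gives $2\opt/t$ per unit, which would produce $2(1-\alpha)\ln n$. To get the clean $(1-\alpha)\ln n$ one must argue more carefully that among the edge-disjoint subtrees of $T^*$ covering the remaining terminals, the \emph{best} one has density at most $\opt/t$ (not $2\opt/t$) — i.e. the averaging should be done against the total remaining optimal cost $\opt_t \le \opt$ and the total remaining terminal count, with the edge-disjointness ensuring $\sum_i \mathrm{cost}(T^*_i) \le \opt$, so a simple averaging over subtrees gives a subtree of density at most $\opt/t$. The doubling in Lemma~\ref{lem:tree} affects only the \emph{number of leaves per subtree} (between $n^\alpha$ and $2n^\alpha$), not the density bound, so after the ``exactly $n^\alpha$ leaves'' normalization the density charge stays $\opt/t$ up to lower-order terms. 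Then the harmonic sum from $\ell$ to $n^\alpha$ is $(1+o(1))(1-\alpha)\ln n$, and I would state the bound with this $(1+o(1))$ and note it matches the claimed $(1-\alpha)\ln n$.
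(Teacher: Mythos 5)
Your proposal follows essentially the same route as the paper's own proof: apply Lemma~\ref{lem:tree} to the optimum tree, argue that in each greedy iteration the guessed-leaves-plus-Dreyfus--Wagner step finds a tree of density comparable to $\opt/t$ where $t$ is the number of still-unconnected terminals, sum the resulting harmonic series from $n$ down to the $\Theta(n^{\alpha})$ stopping threshold to get $(1-\alpha)\ln n\cdot\opt$, and charge one extra $\opt$ for the final clean-up tree. If anything, you are more explicit than the paper about the factor-$2$ loss coming from the ``exactly $n^{\alpha}$ leaves'' normalization and about re-averaging over the edge-disjoint subtrees covering the residual terminals, points the paper glosses over in its set-cover-type analysis.
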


\begin{proof}
Let $T$ be an optimum Steiner tree spanning $K$, let $\opt$
be its cost and let $X$ be the set from Lemma~\ref{lem:tree}
for the tree $T$.
Let us analyze the algorithm in the iteration when it
chooses the set $X$ properly, that is picks the same set as Lemma~\ref{lem:tree}.
Note that since vertices of $X$ belong to $T$
the cost of the first tree $T_0$ found by the algorithm is at most $\opt$.
The last property of Lemma~\ref{lem:tree} guarantees, that our algorithm
always finds a tree with at least as good density as $\opt/r$, where $r$
is the number of not yet connected terminals.
By the standard set-cover type analysis we can bound the cost 
of all the best density trees found by the algorithm by
$$\sum_{i=n}^{e^2\cdot 2^{1-\alpha}} \frac{\opt}{i} = \opt\cdot (H_n - 
H_{e^2\cdot 2^{1-\alpha}}) \tilde{=} (1-\alpha)\ln n\cdot \opt.$$
Finally, the last tree is of cost at most $\opt$, and it can be found in time $2^{\Oh( n^{\alpha}\log n)}$.
\end{proof}

Now we observe that the same theorem applies for 
the $\cp$ problem.
Since the function is both submodular and {\em increasing}
for every collection of pairwise disjoint sets 
$\{S_i\}_{i=1^k}$, $\sum_{i=1}^kf(S_i)\geq f(\bigcup_{i=1}^k S_i)$.
Thus for a given $\alpha$, 
at iteration $i$ there exists a collection 
of leaves $S_i$ so that 
$f(S_i)/c(S_i)\geq f(U)/c(U)$. We can guess $S_i$ 
in time $\exp(n^\alpha\cdot \log n)$ 
and its set of Steiner vertices $X_i$ in time $O(3^{n^{\alpha}})$.
Using the algorithm of \cite{DreyfusW71}, we can find a tree of
density at most $2\cdot \opt/n^{\alpha}$. The rest of the proof is identical.

\section{Hardness for Group Steiner Tree under the \ethh}
\label{sec:gst-subexpo-approx-hardness}

In this section, we show that the approximation hardness of
the group Steiner problem under the \ethh,
which implies that the subexponential-time algorithm for $\gst$ 
of Chekuri and Pal \cite{ChekuriP05} is nearly tight.
This hardness result is implicitly in the work Halperin and Krauthgamer \cite{HalperinK03}.
More precisely, the following is a corollary of Theorem~1.1 in
\cite{HalperinK03}.

\begin{theorem}[Corollary of Theorem~1.1 in \cite{HalperinK03}]
\label{thm:gst-subexpo-hardness}
Unless the \ethh is false, for any parameter $0<\delta<1$, there is no $\exp(2^{\log^{\delta-\varepsilon}N})$-time $\log^{2-\delta-\varepsilon} k$-approximation algorithm for $\gst$, for any $0<\varepsilon<\delta$.
\end{theorem}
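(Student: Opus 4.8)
The strategy is to start from the quasi-polynomial-size PCP hardness of \gst\ on trees due to Halperin and Krauthgamer \cite{HalperinK03}, and to track the blow-up of their reduction as a function of the soundness parameter, exactly as the earlier sections of this paper do for \setcover. Recall that the Halperin--Krauthgamer reduction takes an instance of \lc\ (or, in their formulation, \mr) with soundness $\epsilon$ and produces a \gst\ instance on a tree whose approximation gap is roughly $\log^2 n / \log(1/\epsilon)$ — more precisely, they compose the hard \lc\ instance with a recursively-built ``layered'' gadget of depth $\Theta(\log(1/\epsilon))$ (this is where the $\log^{2-\delta}$ factor comes from), and the size of the resulting tree is $N = n^{O(\log(1/\epsilon))}$, since each of the $\Theta(\log(1/\epsilon))$ levels multiplies the size by a $\poly(n)$ factor. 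The first step is therefore to fix the target ratio $\log^{2-\delta-\varepsilon} k$, solve for the soundness $\epsilon = \epsilon(\delta,\varepsilon)$ needed in the \lc\ instance, and record the resulting size $N$ of the \gst\ instance in terms of $n$ and $\epsilon$.

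Next I would invoke the near-linear-size PCP theorem (Theorem~\ref{thm:near-linear-PCP}) rather than the \pgc, since for this result we only need \ethh, not \pgc: Moshkovitz--Raz gives a \lc\ instance of size $n^{1+o(1)}\poly(1/\epsilon)$ over an alphabet of size $\exp(1/\epsilon)$ with soundness $\epsilon$, starting from $3$-\sat\ of size $n$. We may take $1/\epsilon$ as large as a small power of $n$, so for the relevant regime $1/\epsilon = 2^{\Theta(\log^{?}n)}$ the \lc\ instance still has size $n^{1+o(1)}\cdot 2^{\poly\log}$, which is $2^{\tilde O(\log n)}$, and crucially the alphabet size $\exp(1/\epsilon)$ is affordable. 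Plugging this into the Halperin--Krauthgamer reduction, the \gst\ instance has size $N = n^{O(\log(1/\epsilon))}$; taking $\log$ of both sides, $\log N = \Theta(\log n \cdot \log(1/\epsilon))$. Now choose $1/\epsilon$ so that the gap $\log^2 N / \log(1/\epsilon)$ equals the target $\log^{2-\delta-\varepsilon}k$ (with $k$ the number of groups, which is within $\poly$ of $N$), and unwind to see that $\log(1/\epsilon) = \log^{\Theta(1-\delta)}N$ roughly; then an algorithm distinguishing the two cases in time $\exp(2^{\log^{\delta-\varepsilon}N})$ would, composed with the reduction, decide $3$-\sat\ of size $n$ in time $\exp(2^{\log^{\delta-\varepsilon}N}) = \exp(2^{(\log n \cdot \log(1/\epsilon))^{\delta-\varepsilon}})$, and the point is to check that this is $2^{o(n)}$, contradicting \ethh.

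The delicate bookkeeping — and the main obstacle — is matching the exponents so that the final running time is genuinely subexponential in $n$. One has to be careful that $\log N$ is only a $\polylog$-factor larger than $\log n$ (this is where using the near-linear PCP, rather than a PCP with polynomial blow-up, is essential), and that the chosen $\epsilon$ makes $\log^{\delta-\varepsilon}N$ small enough — in particular $2^{\log^{\delta-\varepsilon}N} = n^{o(1)}$ — so that $\exp(2^{\log^{\delta-\varepsilon}N})$ stays below $2^{o(n)}$ after composing with a reduction that is itself of size $N = n^{1+o(1)}$ in the relevant parameter range. A secondary point to verify is that the Halperin--Krauthgamer construction really does give size $n^{O(\log(1/\epsilon))}$ with the constant in the $O(\cdot)$ independent of $\epsilon$; their paper states the gap but, as with Moshkovitz's \setcover\ reduction in the earlier sections, one must reinspect the construction to pin down the size dependence on the recursion depth explicitly rather than just ``raised to a constant power.'' Once these size and exponent bounds are nailed down, the contradiction with \ethh\ is immediate, and the identical argument for \cst\ follows from the approximation-preserving reduction \cite{EvenKS02} from \cst\ to \gst.
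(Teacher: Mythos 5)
There is a genuine gap: your model of the Halperin--Krauthgamer reduction misidentifies its key parameters, and the final ETH computation does not survive the correction. In their construction the recursion depth is a \emph{free} parameter $H$ (the tree height), the hardness gap is $\Omega(H\log k)$, and the soundness of the underlying \lc instance only needs to be $1/\poly(H,\log m)$ -- it is \emph{not} the case that the depth is $\Theta(\log(1/\epsilon))$, nor that the gap is $\log^2 N/\log(1/\epsilon)$, nor that the size is $n^{O(\log(1/\epsilon))}$. The loss in the exponent comes from the arity of the tree (one copy of the \lc instance per level), so $\log N=\Theta(\ell H\log n)$ where $\ell$ is the number of parallel-repetition rounds; to get ratio $\log^{2-\delta}k$ one is forced to take $H\approx\log^{1/\delta-1}n$, hence $\log N=\log^{1/\delta+o(1)}n$. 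With your accounting ($\log N=\Theta(\log n\cdot\log(1/\epsilon))$ and $\log(1/\epsilon)$ a power of $\log N$ determined by the gap) the relation becomes $\log N\approx\log^{1/(1-\delta)}n$, and then $\exp\bigl(2^{\log^{\delta-\varepsilon}N}\bigr)=\exp\bigl(2^{\log^{(\delta-\varepsilon)/(1-\delta)}n}\bigr)$, which is \emph{not} $2^{o(n)}$ once $\delta-\varepsilon\ge 1-\delta$; so the contradiction with \ethh fails for a large part of the claimed range of $\delta$. Your two displayed formulas are also mutually inconsistent (with depth $\Theta(\log(1/\epsilon))$ and $\log k\approx\log N$, the gap $H\log k$ would be $\Theta(\log n\log^2(1/\epsilon))$, not $\log^2 N/\log(1/\epsilon)$), and you never actually fix the height $H$, which is the quantity that has to be tuned.

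The second problem is the choice of PCP. Because the soundness requirement in \cite{HalperinK03} is only $\ell>c_0(\log H+\log\log m+\log\log d)$, i.e.\ inverse-polylogarithmic soundness, the paper does not need Moshkovitz--Raz at all: it starts from Dinur's near-linear PCP with \emph{constant} soundness and \emph{constant} alphabet, applies $\ell=\Theta((1/\delta-1)\log\log n)$ rounds of parallel repetition, and then runs the HK composition with $H=\log^{1/\delta-1}n$, getting $N,k=\exp(\log^{1/\delta+o(1)}n)$, gap $\Omega(H\log k)=\Omega(\log^{2-\delta-o(1)}k)$, and running time $\exp(2^{\log^{\delta-\varepsilon}N})=\exp(2^{o(\log n)})=2^{o(n)}$, contradicting \ethh for every $0<\varepsilon<\delta$. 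In your plan, by contrast, the regime you say you need ($1/\epsilon=2^{\Theta(\log^{c}n)}$) would make the Moshkovitz--Raz alphabet $\exp(1/\epsilon)$ doubly exponential in $\log^{c}n$; since the alphabet enters the arity of the HK tree, this would blow up $N$ (and hence the allowed running time $\exp(2^{\log^{\delta-\varepsilon}N})$) far beyond what an ETH-based contradiction can tolerate, so the step you call ``affordable'' is actually fatal if taken literally. The high-level skeleton (PCP $\to$ HK composition $\to$ parameter tuning $\to$ \ethh, plus the \cst reduction of \cite{EvenKS02}) matches the paper, but the parameter bookkeeping -- which is the entire content of this theorem -- has to be redone as above.
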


\begin{proof}[Proof Sketch of Theorem~\ref{thm:gst-subexpo-hardness}]

We provide here the parameter translation of the reduction in \cite{HalperinK03},
which will prove Theorem~\ref{thm:gst-subexpo-hardness}.

\paragraph*{The Reduction of Halperin and Krauthgamer.}
We shall briefly describe the reduction of Halperin and Krauthgamer. 
The starting point of their reduction is the \lc instance
obtained from $\ell$ rounds of parallel repetition. 
In the first step, given a $d$-regular \lc instance $\mathcal{G}=(G=(A,B,E),\Sigma_A,\Sigma_B,\phi)$
completeness $1$ and soundness $\gamma$,
they apply $\ell$ rounds of parallel repetition
to get a $d^{\ell}$-regular instance of $\lc$ 
$\mathcal{G}'=(G^\ell=(A^\ell,B^\ell,E'),\Sigma_A^\ell,\Sigma_B^\ell,\phi')$.
To simplify the notation, we let $m=|A|=|B|$, $\sigma=|\Sigma_A^\ell|=|\Sigma_B^\ell|$
be the number of vertices and the alphabet size of the \lc instance $\mathcal{G}$, respectively.
Then we have that the number of vertices and the alphabet size of $\mathcal{G}'$ is
$2m^{\ell}$ and $\sigma^{\ell}$, respectively.
In the second step, they apply a recursive composition to produce an instance $I$ 
of \gst on a complete $(2\cdot m^{\ell}\cdot\sigma^{\ell})$-ary tree $\widehat{T}$ 
of height $H$ on $k=d^{\ell}m^{\ell\cdot H}$ groups.
Moreover, if $\mathcal{G}$ is a Yes-Instance (i.e., there is a labeling that covers all the constraints), 
then there is a feasible solution to \gst 
with cost $H$ with high probability,
and if $\mathcal{G}$ is a No-Instance (i.e., every labeing covers at most $\gamma$ fraction of the edges),
then there is no solution 
with cost less than $\beta H^2\log k$, for some sufficiently small constant $\beta>0$.

In short, the above reduction gives an instance of \gst on a tree
with $N=O((\sigma m)^{\ell H})$ vertices, $k=O(d^{\ell}m^{\ell H})$ groups,
and with approximation hardness gap $\Omega(H\log k)$.
Additionally, the reduction in \cite{HalperinK03} requires
$\ell > c_0(\log H + \log\log m + \log\log d)$ 
for some sufficiently large constant $c_0 > 0$.
\end{proof}

\paragraph*{Subexponential-Time Approximation-Hardness.}
Now we derive the subexponential-time approximation hardness for \gst.
We start by the nearly linear-size PCP theorem of Dinur~\cite{Dinur07},
which gives a reduction from 3-\sat of size $n$ 
(the number of variables plus the number of clauses)
to a label cover instance $\mathcal{G}=(G=(A,B,E),\Sigma_A,\Sigma_B,\phi)$ with 
completeness $1$, soundness $\gamma$ for some $0<\gamma<1$,
$|A|,|B| \leq  n\cdot\polylog(n)$, 
degree $d=O(1)$ and alphabet sets $\Sigma_A|,|\Sigma_B|=O(1)$.

For every parameter $0<\delta<1$, we choose $H=\log^{1/\delta - 1}{n}$,
which then forces us to choose $\ell=\Theta((1/\delta - 1)\log\log{n})$.
Note that we may assume that $\delta \ll n$ since it is a fixed parameter.
Plugging in these parameter settings, 
we have an instance of \gst on a tree with $N$ vertices 
and $k$ groups such that
\[
N=\left(O(1)\cdot{n}^{1+o(1)}\right)^{\Theta((1/\delta - 1)\log\log{n})\cdot \log^{1/\delta - 1}{n}}
 = \exp\left(\log^{1/\delta+o(1)}{n}\right)
\]
and 
\[
k=O(1)^{\Theta((1/\delta - 1)\log\log{n})}\left({n}^{1+o(1)}\right)^{\Theta((1/\delta - 1)\log\log{n})\cdot \log^{1/\delta}{n}}
 = \exp\left(\log^{1/\delta+o(1)}{n}\right)
\]
Observe that $H \geq \log^{1-\delta - o(1)}k$.
Thus, the hardness gap is 
$\Omega(H\log k)=\Omega(\log^{2 -\delta - o(1)} k)$.
This means that any algorithm for \gst on this family of instances
with approximation ratio $\log^{2-\delta-\varepsilon}k$, 
for any constant $\varepsilon>0$, would be able to solve 3-\sat. 

Now suppose there is an $\exp(2^{\log^{\delta-\varepsilon}N})$-time $\log^{2-\delta-\varepsilon}k$-approximation algorithm for \gst,
for some $0<\varepsilon<\delta$.
We apply such algorithm to solve an instance of \gst derived from 3-\sat as above.
Then we have an algorithm that runs in time 
\begin{align*}
\exp(2^{\log^{\delta-\varepsilon}N})
= 
\exp(2^{(\log^{1/\delta+o(1)}{n})^{\delta-\varepsilon}})
= 
\exp(2^{(\log^{\frac{(1+o(1))(\delta-\varepsilon)}{\delta}}{n})})
=
\exp(2^{o(\log{n})})
= 2^{o({n})}\\
\end{align*}
This implies a subexponential-time algorithm for 3-\sat, which contradicts the $\ethh$.
Therefore, unless the $\ethh$ is false, there is no $\exp(2^{\log^{\delta-\varepsilon}N})$-time $\log^{2-\delta-\varepsilon}k$-approximation algorithm for \gst, thus proving Theorem~\ref{thm:gst-subexpo-hardness}

Since we take log from the expression, the above is also true 
if we replace $k$ by  $N$.
Combined with \cite{ChekuriP05} and \cite{EvenKS02},
we have the following corollary which shows 
almost tight running-time lower and upper bounds for approximating $\gst$ and $\cst$
to a factor $\log^{2-\delta}N$.
\begin{corollary}
The $\gst$ and {\cst} problems on graphs with $n$ vertices
admit $\log^{2-\delta}n$-approximation algorithms
for any constant $\delta<1$ that runs in time 
$\exp(2^{(1+o(1))\log^{\delta}n})$.
In addition, for any constant $\epsilon>1$
there is no $\log^{2-\delta-\epsilon} n$ approximation algorithm
for $\gst$ and $\cst$ that runs in time 
$\exp(2^{(1+o(1))\log^{\delta-\epsilon}n})$.
\end{corollary}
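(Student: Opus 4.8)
The plan is to assemble the corollary from three ingredients that have already been set up in the paper. For the upper bound, I would simply invoke the result of Chekuri and Pal \cite{ChekuriP05}: for any constant $0<\delta<1$, $\gst$ on $n$ vertices admits a $\log^{2-\delta}n$-approximation running in time $\exp(2^{(1+o(1))\log^{\delta}n})$. To transfer this to $\cst$, I would cite the approximation-preserving reduction from $\cst$ to $\gst$ of \cite{EvenKS02}: since that reduction blows up the instance only polynomially (so $\log n$ changes by at most a constant factor and $\log^{\delta}n$ by a $(1+o(1))$ factor), running the $\gst$ algorithm on the reduced instance yields the same guarantees for $\cst$. This handles the first sentence of the corollary.

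For the lower bound, the work is essentially already done in Theorem~\ref{thm:gst-subexpo-hardness} and the ``Subexponential-Time Approximation-Hardness'' paragraph preceding it. There we showed: unless the $\ethh$ fails, for any $0<\delta<1$ and any $0<\varepsilon<\delta$ there is no $\exp(2^{\log^{\delta-\varepsilon}N})$-time $\log^{2-\delta-\varepsilon}k$-approximation algorithm for $\gst$, and the remark immediately after notes that, because one takes a logarithm of the size expression in the final running-time calculation, $k$ may be replaced by $N$ throughout. So I would just restate that with the variable renamed from $N$ to $n$, and match the $\epsilon$ in the corollary statement to the $\varepsilon$ of the theorem (the condition ``$\epsilon>1$'' in the corollary is simply a way of saying $\epsilon$ is a constant strictly between $0$ and $\delta$ once one also requires $\epsilon<\delta$; in any case the regime of interest is $0<\epsilon<\delta$, where $\log^{2-\delta-\epsilon}n$ is still $\omega(1)$). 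Finally, I would again pull the $\cst$ case along via the reduction of \cite{EvenKS02}: an algorithm for $\cst$ with the stated resources would, composed with that reduction, give one for $\gst$, contradicting Theorem~\ref{thm:gst-subexpo-hardness}.

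Concretely, the steps in order are: (i) cite \cite{ChekuriP05} for the $\gst$ upper bound; (ii) cite \cite{EvenKS02} to carry the upper bound to $\cst$, checking the polynomial size blow-up only perturbs the exponent by a $(1+o(1))$ factor; (iii) restate Theorem~\ref{thm:gst-subexpo-hardness} with $k$ replaced by $N$ (as justified in the text) and rename $N$ to $n$, obtaining the $\gst$ lower bound; (iv) cite \cite{EvenKS02} once more to transfer the lower bound to $\cst$. No new calculation is needed beyond confirming that each reduction preserves the $(1+o(1))$ slack in the exponents.

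The only mildly delicate point --- and the one I would be most careful about --- is bookkeeping of the $(1+o(1))$ factors and the interplay between the two size parameters $N$ (number of tree vertices) and $k$ (number of groups). In the hardness reduction both are $\exp(\log^{1/\delta+o(1)}n)$, so $\log N$ and $\log k$ agree up to a $(1+o(1))$ factor, and the substitution ``replace $k$ by $N$'' is harmless; I would state this explicitly rather than leave it implicit. Likewise, when the $\cst\to\gst$ reduction inflates the instance polynomially, $\log^{\delta}$ of the new size equals $\log^{\delta}$ of the old size times $(1+o(1))$, which is exactly the slack already absorbed into the ``$(1+o(1))$'' in the exponent of $\exp(2^{(1+o(1))\log^{\delta}n})$, so no guarantee is lost. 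Apart from this accounting, the corollary is an immediate consequence of results already in hand.
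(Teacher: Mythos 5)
Your proposal is correct and follows essentially the paper's own (largely implicit) argument: the upper bound is Chekuri--Pal \cite{ChekuriP05} carried to \cst via the reduction of \cite{EvenKS02}, and the lower bound is Theorem~\ref{thm:gst-subexpo-hardness} with $k$ replaced by $N$ (and $N$ renamed $n$), justified exactly as you say because $\log N$ and $\log k$ agree up to a $(1+o(1))$ factor in the hardness construction. One small correction: for the \cst lower bound the \cst-to-\gst reduction of \cite{EvenKS02} points the wrong way for transferring hardness; what is actually needed is only the trivial observation that \gst is the special case of \cst with all demands equal to one, so any \cst algorithm is in particular a \gst algorithm and the lower bound carries over immediately.
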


{\bf Remark:} We omit the algorithm for the {\cp}
problem, as its similar to the algorithm for $\dst$.
The lower bound holds because {\cp} 
has $\setcover$ as a special case.

\subparagraph*{Acknowledgments.}

The work of Marek Cygan is part of a project TOTAL that has received funding from the European Research Council (ERC)
under the European Union’s Horizon 2020 research and innovation programme (grant agreement No 677651).
Guy Kortsarz was partially supported by NSF grant 1540547.
Bundit Laekhanukit was partially supported by ISF grant no. 621/12, and I-CORE grant no. 4/11 and by the 1000-youth award from the Chinese Government.
Parts of the work were done while Guy Kortsarz and Bundit Laekhanukit were at the Weizmann Institute of Science, and parts of it were done while Bundit Laekhanukit was at the Max-Plank-Institute for Informatics.

\bibliography{u1}

\end{document}